\newtheorem{theorem}{Theorem}
\begin{document}
\title{Experiment, Modeling, and Analysis\\ of Wireless-Powered Sensor Network\\ for Energy Neutral Power Management}

\author{Dedi Setiawan, Arif Abdul Aziz, Dong In Kim, \IEEEmembership{Senior Member, IEEE} and Kae Won Choi, \IEEEmembership{Senior Member, IEEE}
\thanks{D.~Setiawan is with the Convergence Institute of Biomedical Engineering \& Biomaterials, Seoul National University of Science and Technology, Korea (email: morethanubabe@gmail.com).}
\thanks{A.~A.~Aziz is with the Dept.~of Computer Science and Engineering, Seoul National University of Science and Technology, Korea (email: arif.abdul.aziz92@gmail.com).}
\thanks{D.~I.~Kim and K.~W.~Choi are with the School of Information and Communication Engineering, Sungkyunkwan University (SKKU), Suwon, Korea (email: dikim@skku.ac.kr and kaewon.choi@gmail.com).}
\thanks{This work was supported by the National Research Foundation of Korea (NRF) grant funded by the Korean government (MSIP) (2014R1A5A1011478).}
}

\maketitle

\begin{abstract}
In this paper, we provide a comprehensive system model of a wireless-powered sensor network (WPSN) based on experimental results on a real-life testbed.
In the WPSN, a sensor node is wirelessly powered by the RF energy transfer from a dedicated RF power source.
We define the behavior of each component comprising the WPSN and analyze the interaction between these components to set up a realistic WPSN model from the systematic point of view.
Towards this, we implement a real-life and full-fledged testbed for the WPSN and conduct extensive experiments to obtain model parameters and to validate the proposed model.
Based on this WPSN model, we propose an energy management scheme for the WPSN, which maximizes RF energy transfer efficiency while guaranteeing energy neutral operation.
We implement the proposed energy management scheme in a real testbed and show its operation and performance.
\end{abstract}

\begin{IEEEkeywords}
Wireless-powered communication networks, RF energy transfer, sensor networks, stored energy evolution model, energy neutral operation, duty cycling
\end{IEEEkeywords}

\section{Introduction}\label{section:introduction}

The wireless power transfer technology enables a power source to wirelessly transfer electrical energy to another device by means of electromagnetic fields.
While near-field wireless power transfer technologies such as inductive and magnetic resonant coupling can deliver energy to only close-by devices, the radio frequency (RF) energy transfer technology is able to realize far-field power transfer to remotely located devices \cite{Huang:2015}.
However, the end-to-end efficiency of the RF energy transfer is typically very low since a receive antenna captures only a very small fraction of electromagnetic energy that is spread out in space.
Therefore, one of the most promising application areas of RF energy transfer is sensor networks since sensor nodes are generally sustainable by a relatively small amount of energy provision \cite{Xie:2013}.
In this paper, we focus on studying a wireless-powered sensor network (WPSN) that is powered by the RF energy transfer from a dedicated RF power source.

In recent years, a number of research works have been published by communications society in the area of wireless-powered communication networks (WPCNs) \cite{Bi:2015,Lu:2016}.
The WPCN is defined as wireless networks consisting of a hybrid access point (H-AP), which acts as both a communication gateway and a power source, and communication nodes, which are powered by the RF energy transfer from the H-AP.
Most of the works regarding the WPCN focuses on theoretical investigation into radio resource allocation (e.g., \cite{Lu:2015Dec}), beamforming (e.g., \cite{Choi:2015}), cooperative communications (e.g., \cite{Chen:2015}), and full-duplex communications (e.g., \cite{Kang:2015}).
The weakness of these theoretical works is that they do not incorporate realistic power transfer, energy harvesting, and energy consumption models.
For example, most of these works ideally assume that a dominant cause of energy drain is the power used for data transmission, while the circuit power consumption is more significant in most practical applications.
In addition, these works assume that RF energy harvesting efficiency is constant even though the actual RF energy harvesting efficiency has a complex non-linear curve because of non-ideal diode behavior.
Therefore, it is of paramount importance to set up a practical WPCN model based on real experiments on a testbed.

Other than theoretical studies on the WPCN, practical circuit design issues of the RF energy transfer have been investigated in \cite{Dolgov:2010, Popovic:2013, Visser:2012, Farinholt:2009}.
In these works, the authors design and implement RF energy harvesting circuit components such as a rectenna and a maximum power point tracking (MPPT) module.
Among these works, \cite{Dolgov:2010} and \cite{Popovic:2013} target to harvest energy from ambient power sources (e.g., TV and cell towers) whereas a dedicated RF power source is considered in \cite{Visser:2012} and \cite{Farinholt:2009}.
Since these works focus their efforts on designing individual circuit components, they do not give a systematic and integrated view of sensor networks powered by the RF energy transfer.

In this paper, we provide a comprehensive system model of WPSNs based on real-life experiments.
The WPSN under consideration consists of a power beacon (i.e., an RF power source) and a sensor node.
The power beacon and the sensor node again consist of many circuit components, for example, an amplifier, an RF transceiver, a wireless energy harvester, and an energy storage, etc.
Rather than delving into each component, we model the operation of the whole WPSN system by defining the behavior of each component and the interaction between components.
Moreover, we set up a real-life WPSN testbed and conduct extensive experiments to obtain model parameters and validate the proposed model.
To our best knowledge, there has been no research work that provides an experiment-based model of the WPSN.
Therefore, our WPSN model can be used as a baseline model for computer simulation and theoretical analysis of the WPSN.
In addition, this model is expected to facilitate the understanding of the WPSN and to highlight potentials and limitations of the WPSN.

Another contribution of this paper is to propose an energy management scheme for energy neutral operation based on the WPSN model.
The energy neutral operation makes sure that power consumption is less than harvested power for keeping a sensor node alive.
A large number of research works have been conducted on energy harvesting sensor networks harnessing energy from ambient energy sources \cite{Sudevalayam:2011}.
Since ambient energy harvesting is typically uncontrollable, an energy management scheme controls a duty cycle of a sensor node to guarantee energy neutral operation.
For example, adaptive energy management schemes are proposed for sensor networks exploiting solar power (e.g., \cite{Moser:2010, Renner:2014}) and ambient RF power (e.g., \cite{Shigeta:2013, Vyas:2013}).
In contrast to these works assuming uncontrollable ambient power sources, our WPSN model adopts a controllable dedicated RF power source.
This controllability of the power source adds a new dimension to the traditional energy neutral operation problem.
The proposed energy management scheme aims at minimizing power consumption of an amplifier in the RF power source while guaranteeing the energy neutral operation.
The proposed energy management scheme achieves its goal by adaptively controlling the RF transmit power so that the RF energy transfer efficiency is maximized.
We implement the proposed energy management scheme in a real testbed and show its operation and performance.

In our companion work \cite{Choi:2016}, we have studied the multi-antenna WPSN, in which the power beacon concentrates the RF energy on the sensor node by using an antenna array to enhance the RF energy transfer efficiency.
The main focus of this companion work is to propose an energy beamforming algorithm that dynamically steers the microwave beam and to conduct experiments on the multi-antenna WPSN testbed for testing the algorithm performance.
Differently from the companion work, this paper aims to set up an integrated system model of the single antenna WPSN and to propose an efficient energy management scheme.

The rest of the paper is organized as follows.
We present the WPSN system model in Section \ref{section:system model}.
In Section \ref{section:measurement}, we explain our testbed implementation and present measurement results.
The proposed energy management scheme is described in Section \ref{section:energy management}.
Section \ref{section:result} presents detailed experimental results, and the paper is concluded in Section \ref{section:conclusion}.

\section{System Model}\label{section:system model}

\subsection{Wireless-Powered Sensor Network Model}

\begin{figure}
	\centering
    \includegraphics[width=7cm, bb=1.6in 2.3in 8.4in 6in]{./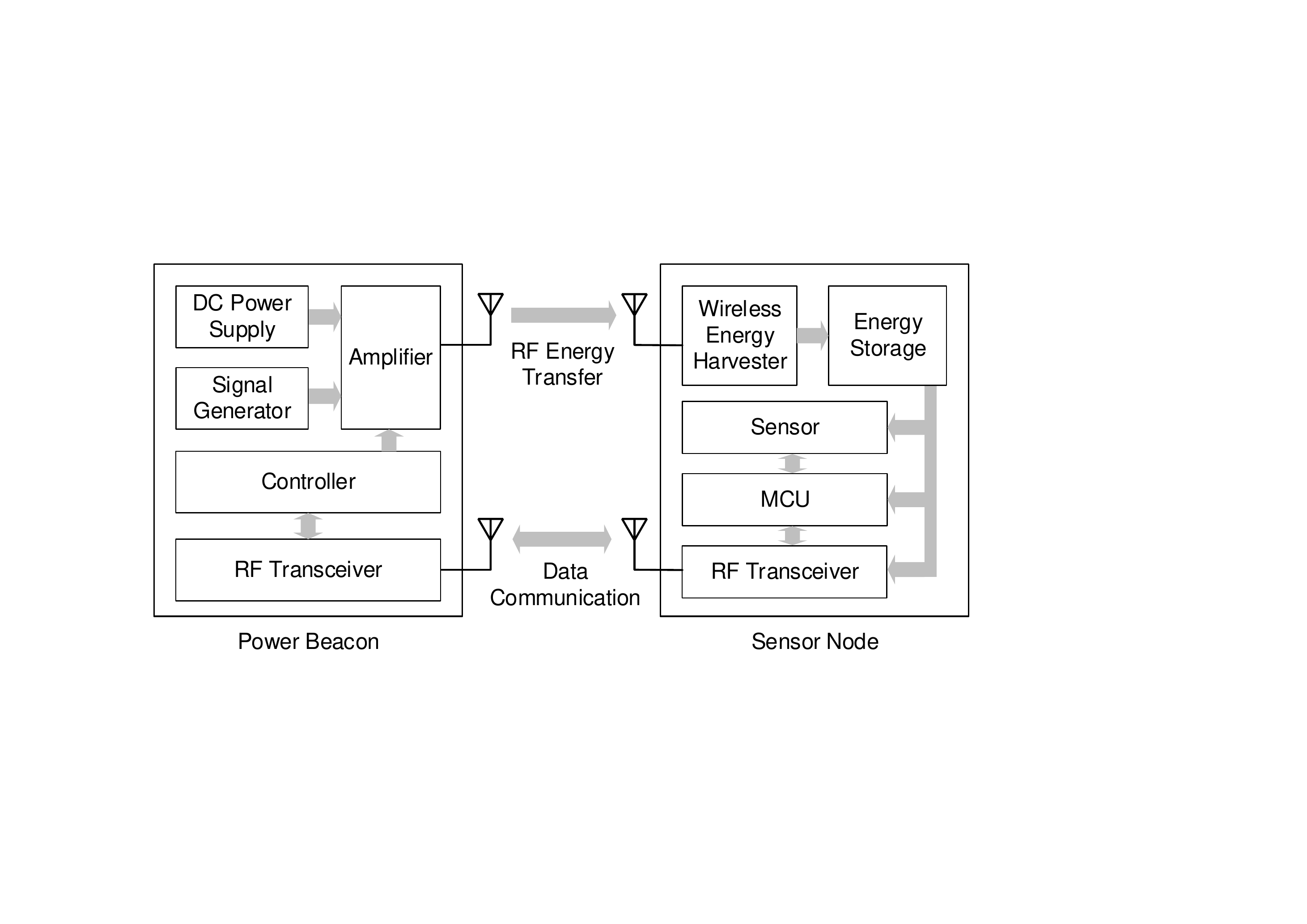}
    \caption{Block diagram of WPSN model.}
    \label{fig:model}
\end{figure}

In this section, we introduce an overall system model of the WPSN under consideration.
In Fig.~\ref{fig:model}, we show the block diagram of the WPSN model.
The WPSN consists of one power beacon and one sensor node.
The basic role of the power beacon is to wirelessly supply energy to the sensor node by means of the RF energy transfer technique.
The power beacon is connected to a power grid.
On the other hand, the sensor node relies solely on the energy wirelessly supplied by the power beacon, without any other power source.
The power beacon gathers sensor measurement results from the sensor node and controls the operation of the sensor node through a data communication link.

In the power beacon, the signal generator generates a weak continuous wave (CW) RF source signal and feeds the RF source signal to the amplifier.
The amplifier amplifies the weak RF source signal from the signal generator and sends a high-power RF signal to the air through the transmit antenna.
For amplification, the amplifier consumes DC power from the DC power supply.
Therefore, the amplifier performs DC-to-RF power conversion.
The controller of the power beacon conducts a power management function that keeps the stored energy of the sensor node to a sufficient level while minimizing power consumption of the amplifier.
For the power management, the controller is able to control the gain of the amplifier.
In addition, the controller can send a control command to the sensor node via the RF transceiver for the power management.
The control decision made by the controller is based on the sensor measurement report from the sensor node, which is delivered by the RF transceiver.

In the sensor node, the wireless energy harvester receives the RF signal from the power beacon through the receive antenna.
The wireless energy harvester performs RF-to-DC conversion by using a rectifier.
The DC power from the wireless energy harvester is stored in the energy storage.
We adopt a supercapacitor as the energy storage rather than a rechargeable battery.
A supercapacitor is considered as a suitable energy storage device for energy harvesting sensor nodes since it can endure rapid charging and discharging cycles.
The micro controller unit (MCU), which consists of a central processing unit (CPU) and peripherals, controls the whole sensor node.
The MCU obtains sensing results from the sensors and sends the sensor measurement report to the power beacon via the RF transceiver.
The MCU, the RF transceiver, and the sensor consume energy stored in the energy storage.

The WPSN system model has two RF channels: the RF energy transfer channel and the data communication channel.
These two channels use different frequency bands, and therefore there is no interference between these channels.

\subsection{Power Beacon and RF Energy Transfer Channel Model}

In this subsection, we provide a more detailed model of the power beacon and the RF energy transfer channel.
The RF source signal generated by the signal generator is a CW signal with the frequency $f_o$ and the power $p_\text{src}$.
The frequency $f_o$ and the source power $p_\text{src}$ are fixed.
The amplifier amplifies this weak RF source signal to generate a transmit RF signal.
Henceforth, the power of the transmit RF signal will be called transmit power.
We assume that the amplifier can be dynamically turned on and off by the controller.
The transmit power, denoted by $p_\text{tx}$, is given by
\begin{align}\label{eq:amptx}
p_\text{tx} = \chi_\text{amp}\cdot g_\text{amp}(p_\text{src})\cdot p_\text{src},
\end{align}
where $g_\text{amp}(p)$ is the amplifier gain as a function of amplifier input power $p$, and $\chi_\text{amp}$ is the indicator that is one if the amplifier is turned on; and zero, otherwise.
The transmit power cannot exceed the maximum output power of the amplifier.

The power amplifier consumes power from the DC power supply.
Let $p_\text{cons}$ denote the amplifier power consumption.
When the amplifier is turned on, the power added efficiency (PAE) of the amplifier is given by
\begin{align}\label{eq:pae}
\text{PAE} = \frac{p_\text{tx}-p_\text{src}}{p_\text{cons}} \simeq \frac{p_\text{tx}}{p_\text{cons}},
\end{align}
since $p_\text{src}$ is negligibly small compared to $p_\text{tx}$.
Since the PAE is a function of the output power of the amplifier (i.e., the transmit power), we define $\theta(p)$ as the PAE function mapping the output power $p$ to the PAE.
The amplifier power consumption is zero when the amplifier is turned off (i.e., when $\chi_\text{amp}=0$).
From \eqref{eq:pae}, the amplifier power consumption is given as a function of $p_\text{tx}$ by
\begin{align}\label{eq:ampcons}
p_\text{cons} = \chi_\text{amp}\cdot \frac{p_\text{tx}}{\theta(p_\text{tx})}.
\end{align}

The transmit RF signal from the amplifier is sent from the transmit antenna of the power beacon.
This transmit RF signal goes through the RF energy transfer channel and arrives at the receive antenna of the sensor node.
The receive power, denoted by $p_\text{rx}$, is defined as the power of the received RF signal at the sensor node.
The receive power is given by
\begin{align}
p_\text{rx} = h\cdot p_\text{tx},
\end{align}
where $h$ is the power attenuation of the RF energy transfer channel.
The power attenuation depends on the distance between the power beacon and the sensor node, which is denoted by $d$.
The power attenuation function $\psi$ defines the relationship between the power attenuation and the distance according to the following path loss formula:
\begin{align}
h=\psi(d) = G/d^\nu,
\end{align}
where $d$ is the distance in meter, $\nu$ is a path loss exponent, and $G$ is the power attenuation at one meter distance.

\subsection{Sensor Node Power Model}\label{section:powermodel}

In this subsection, we explain the sensor node model in more detail.
Fig.~\ref{fig:circuit} shows an equivalent power circuit model of the sensor node.
This model consists of three hardware modules: a wireless energy harvesting module, an energy storage module, and a sensor module.
The RF signal from the power beacon is received by the wireless energy harvesting module and is converted to a DC current to supply energy to the sensor node.
A supercapacitor is used in the energy storage module.
The harvested energy is stored in the supercapacitor when more energy is harvested than the energy consumed by the sensor module.
The sensor module obtains sensing results, performs computation, and sends the sensing results to the power beacon.
For this operation, the sensor module consumes energy harvested by the wireless energy harvesting module.

\begin{figure}
	\centering
    \includegraphics[width=6.5cm, bb=1in 4.8in 7.4in 9.3in]{./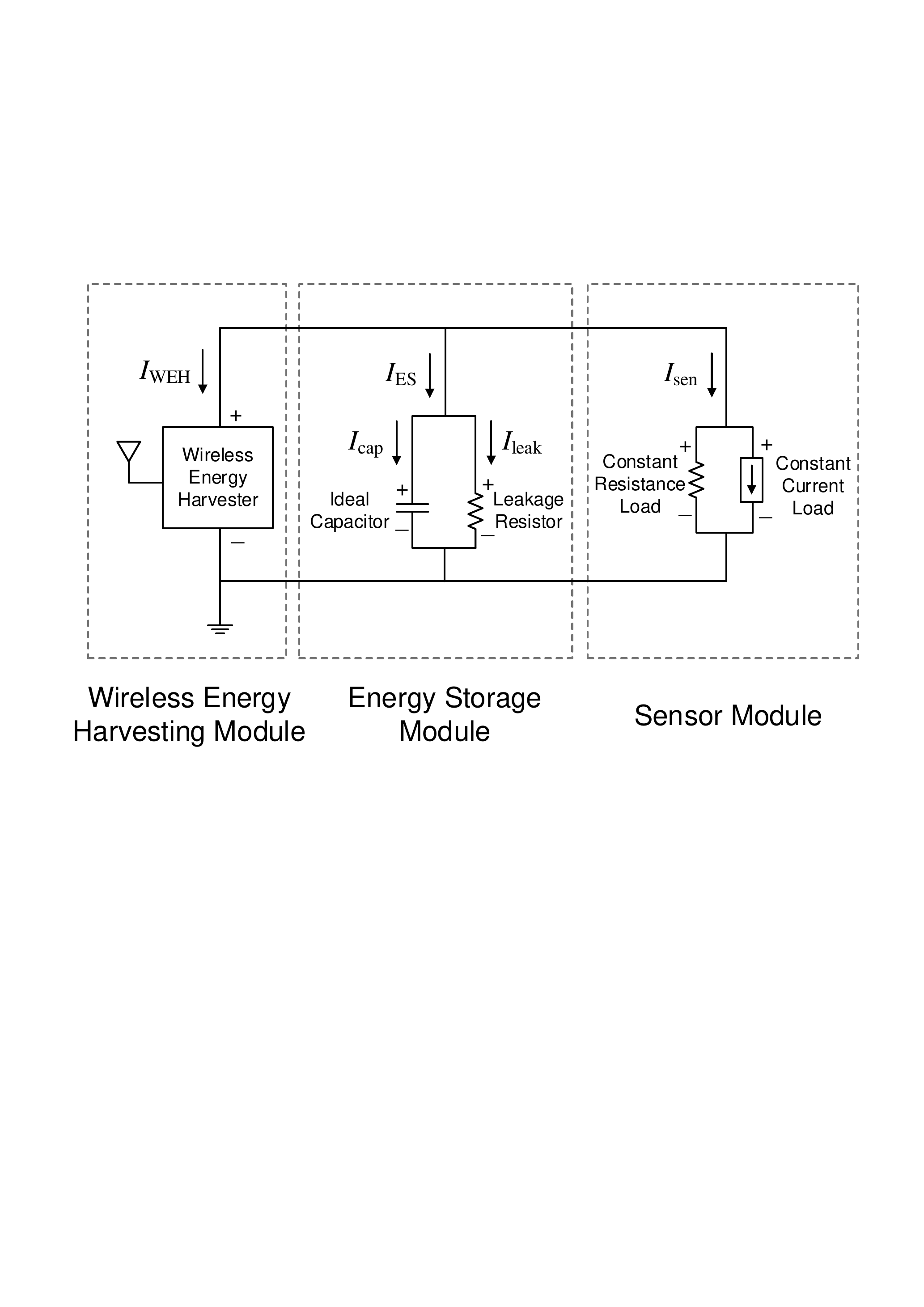}
    \caption{Sensor node equivalent power circuit model.}
    \label{fig:circuit}
\end{figure}

In the sensor node model in Fig.~\ref{fig:circuit}, all circuit components are simply connected in parallel and have the same voltage, denoted by $V$, dropped across them.
Henceforth, we will call $V$ the sensor node voltage.
The sensor node voltage is dependent upon the voltage across the supercapacitor, which is in turn decided by the stored energy in the supercapacitor.
In our sensor node model, we do not incorporate a DC-DC converter component such as an MPPT module and the voltage regulator.
Although an MPPT module can apply the proper resistance to the wireless energy harvester to obtain the maximum power, an MPPT module consumes some energy by itself.
Since only a very small amount of energy can be harvested by the wireless energy harvester, it is not appropriate to use an energy-consuming MPPT module.
For the same reason, we do not use the voltage regulator for the sensor module.
Since we do not adopt any DC-DC converter component, the input voltage to the wireless energy harvester and the sensor module is equal to the voltage across the supercapacitor.

The wireless energy harvester receives the RF signal from the power beacon.
The wireless energy harvester generally consists of a matching network and a rectifier.
The performance of the wireless energy harvester depends on how these components are designed.
Due to the nonlinearity of these components, it is very difficult to derive a meaningful analytic expression for describing the behavior of the wireless energy harvester.
Rather than modeling each component of the wireless energy harvester, a current-voltage curve (i.e., an I-V curve) according to a given received power (i.e., $p_\text{rx}$) can be used for fully characterizing the wireless energy harvester.
The I-V function of the wireless energy harvester, denoted by $\rho$, is defined as
\begin{align}\label{eq:ehiv}
I_\text{WEH} = -\rho(V,p_\text{rx}),
\end{align}
where $p_\text{rx}$ is the received power and $I_\text{WEH}$ is the current through the wireless energy harvester.

As shown in Fig.~\ref{fig:model}, a supercapacitor of the energy storage module is typically modeled by the equivalent circuit consisting of an ideal capacitor and a leakage resistor.
The ideal capacitor has a capacitance of $C$.
The time derivative of the voltage across the ideal capacitor is proportional to the current as
\begin{align}
\frac{\mathrm{d}V}{\mathrm{d}t} = \frac{I_\text{cap}}{C},
\end{align}
where $I_\text{cap}$ is the current through the ideal capacitor.
The stored energy in the ideal capacitor, denoted by $E$, is given by
\begin{align}\label{eq:sten}
E = \frac{1}{2}C V^2.
\end{align}
A leakage resistor is introduced to describe a non-ideal behavior of a real supercapacitor.
A supercapacitor discharges by itself even when it is disconnected from other parts of the circuit.
If the resistance of the leakage resistor is $R_\text{leak}$, the leakage current is $I_\text{leak} = V/R_\text{leak}$.
The supercapacitor in consideration is fully characterized by two parameters: the capacitance $C$ and the leakage resistance $R_\text{leak}$.

The sensor module acts as a load that draws energy from the wireless energy harvester and the supercapacitor.
The main energy sinks in the sensor module are integrated circuits (ICs) such as the MCU and the RF transceiver.
In the sensor module, the RF transceiver is typically compliant to a low power communication standard such as IEEE 802.15.4.
We do not consider energy consumption of sensors such as a temperature sensor since it is very small compared to the energy consumption of other ICs.
The ICs on the sensor module act as different types of loads depending on how they consume energy.
Two types of loads are incorporated in this model: a constant resistance load and a constant current load.
Typically, the sensor module alternates between the several different sensor modes: idle, active, receive, and transmit.
The mode of the sensor module is indexed by $m=\text{idle}$, $\text{act}$, $\text{rx}$, and $\text{tx}$ for the idle, active, receive, and transmit mode, respectively.
Almost all the ICs are turned off except for minimal functionalities (e.g., timer) in the idle mode, the MCU is activated in the active mode, the RF transceiver is ready to receive data in the receive mode, and the RF transceiver transmits data in the transmit mode.
Each mode has different load characteristics.
When the sensor module is in mode $m$, the resistance of the constant resistance load and the current of the constant current load are denoted by $\gamma(m)$ and $\zeta(m)$, respectively.

The sensor module stops working if the input voltage drops under the minimum voltage required for operation.
We define this minimum voltage as the minimum sensor node voltage, denoted by $V_\text{min}$.
The wireless energy harvester cuts off the current if the sensor node voltage exceeds the maximum allowed voltage of the supercapacitor or the sensor module to avoid damage to these components.
We define this maximum voltage as the maximum sensor node voltage, denoted by $V_\text{max}$.

\subsection{Stored Energy Evolution Model}\label{section:analysis}

This subsection introduces the derivation of the ordinary differential equation (ODE) that governs the time evolution of the stored energy $E$ in the sensor node.
The time evolution model of the stored energy in a supercapacitor can be obtained by analyzing the equivalent circuit
\cite{Mishra:2015,Renner:2014}.
It is noted that the stored energy evolution model in this subsection and the sensor node power model in Section \ref{section:powermodel} are partly introduced in \cite{Choi:2016} as well.
However, these models are not experimentally validated in \cite{Choi:2016}.

The time derivative of the stored energy in \eqref{eq:sten} is equal to the power charged to the ideal capacitor as given in the following equation:
\begin{align}\label{eq:dedt}
\frac{\mathrm{d}E}{\mathrm{d}t} = VI_\text{cap}.
\end{align}
By the Kirchhoff's law, the current through the ideal capacitor $I_\text{cap}$ in \eqref{eq:dedt} is given by
\begin{align}
I_\text{cap} = I_\text{ES} - I_\text{leak}.\label{eq:i1}
\end{align}

Let us obtain the current through the energy storage module $I_\text{ES}$ and the current through the leakage resistor $I_\text{leak}$.
By the Kirchhoff's law, the current through the energy storage module is given by
\begin{align}
I_\text{ES} = - I_\text{WEH} - I_\text{sen},\label{eq:i2}
\end{align}
The current through the wireless energy harvester is $I_\text{WEH} = -\rho(V,p_\text{rx})$ from \eqref{eq:ehiv} and the current through the sensor module is given by
\begin{align}\label{eq:isen}
I_\text{sen} = \frac{V}{\gamma(m)} + \zeta(m).
\end{align}
From \eqref{eq:ehiv}, \eqref{eq:i2}, and \eqref{eq:isen}, the current through the energy storage module is
\begin{align}\label{eq:ies}
I_\text{ES} = - I_\text{WEH} - I_\text{sen}
=\rho(V,p_\text{rx}) - \frac{V}{\gamma(m)} - \zeta(m).
\end{align}
In addition, we obtain the current through the leakage resistor $I_\text{leak}$ as
\begin{align}\label{eq:ileak}
I_\text{leak} = \frac{V}{R_\text{leak}}.
\end{align}

From \eqref{eq:dedt}, \eqref{eq:i1}, \eqref{eq:ies}, and \eqref{eq:ileak}, the time derivative of the stored energy is given by
\begin{align}\label{eq:evol1}
\begin{split}
\frac{\mathrm{d}E}{\mathrm{d}t} & = VI_\text{cap} = V(I_\text{ES}-I_\text{leak})\\
&=\rho(V,p_\text{rx})V - \frac{V^2}{\gamma(m)} - \zeta(m) V -  \frac{V^2}{R_\text{leak}}.
\end{split}
\end{align}
In \eqref{eq:evol1}, $\rho(V,p_\text{rx})V$ is the harvested power by the wireless energy harvester.
We define the wireless energy harvesting efficiency as the ratio of the harvested power to the receive power.
Then, the wireless energy harvesting efficiency as functions of $V$ and $p_\text{rx}$ is given by
\begin{align}\label{eq:efficiency}
\eta_V(V,p_\text{rx}) = \frac{\rho(V,p_\text{rx})V}{p_\text{rx}}.
\end{align}

Since $V = \sqrt{2E/C}$ from \eqref{eq:sten}, we can rewrite \eqref{eq:evol1} with respect to the stored energy $E$ as follows:
\begin{align}\label{eq:evol2}
\begin{split}
\frac{\mathrm{d}E}{\mathrm{d}t} = \phi(E,p_\text{rx}) - \xi_m(E) - \xi_\text{leak}(E).
\end{split}
\end{align}
In \eqref{eq:evol2}, $\phi(E,p_\text{rx})$ is the harvested power such that
\begin{align}\label{eq:phieph}
\phi(E,p_\text{rx}) = \eta_E(E,p_\text{rx})\cdot p_\text{rx},
\end{align}
where $\eta_E(E,p_\text{rx})$ is the wireless energy harvesting efficiency function with respect to the stored energy such that
\begin{align}
\eta_E(E,p_\text{rx}) = \eta_V\big(\sqrt{2E/C},p_\text{rx}\big).
\end{align}
In \eqref{eq:evol2}, $\xi_m(E)$ is the sensor module power consumption in mode $m$ such that
\begin{align}
\xi_m(E) = \frac{2}{C\gamma(m)}E + \sqrt{\frac{2 \zeta(m)^2}{C}}\sqrt{E},
\end{align}
and $\xi_\text{leak}(E)$ is the supercapacitor leakage power such that
\begin{align}
\xi_\text{leak}(E) = \frac{2}{C\cdot R_\text{leak}}E.
\end{align}

We define the minimum and maximum stored energies corresponding to the minimum and the maximum sensor node voltages as $E_\text{min} =  C(V_\text{min})^2/2$ and $E_\text{max} =  C(V_\text{max})^2/2$, respectively.
Then, the stored energy $E$ should be kept above $E_\text{min}$ for continuous sensor node operation.
In addition, the stored energy cannot exceed $E_\text{max}$ since no power is charged by the wireless energy harvester in the case that $E \ge E_\text{max}$.

\section{Testbed Setup and Measurement}\label{section:measurement}

\subsection{Testbed Setup}

We set up a real-life WPSN testbed that implements the system model described in Section \ref{section:system model}.
This testbed is used for measuring the parameters and the functions of the system model (i.e., $\theta$, $\psi$, $\rho$, $C$, $R_\text{leak}$, $\gamma(m)$, and $\zeta(m)$), verifying the system model, and evaluating the proposed energy management scheme.
This testbed has the same architecture as the system model in Fig.~\ref{fig:model}, that is, the testbed consists of a power beacon, which is able to wirelessly send energy via microwave, and a sensor node, which makes use of the energy received from the power beacon.
Only off-the-shelf commercially available hardware components are used to build the testbed, and the testbed is controlled by software so that various energy and data transmission experiments can easily be conducted and reproduced.

The power beacon consists of a controller, a signal generator, an amplifier, and a DC power supply. 
In the testbed, the controller is comprised of a desktop computer, a PXI chassis, and an FPGA module.
In addition, a signal generator is Tektronix TSG-4104A, and an amplifier is RFMD RF2173. 
The RF source signal generated by the signal generator is fed into the amplifier.
The amplifier is controlled by the analog output voltage of the FPGA module, which is connected to the desktop computer via the PXI chassis.
In the desktop computer, a Labview software is used to control the power beacon and process the measurement data.
The amplifier draws power from the DC power supply (i.e., GW Instek GPC-3060D) to amplify the RF source signal.
The RF energy transfer uses the frequency of 920 MHz.
Fig.~\ref{fig:pbpic} shows a picture of the power beacon in our testbed.

\begin{figure}
    \centering
    \subfigure[Power beacon]{
        \label{fig:pbpic}\includegraphics[width=2.92cm, bb=0.0in 0in 7.5in 10.5in] {./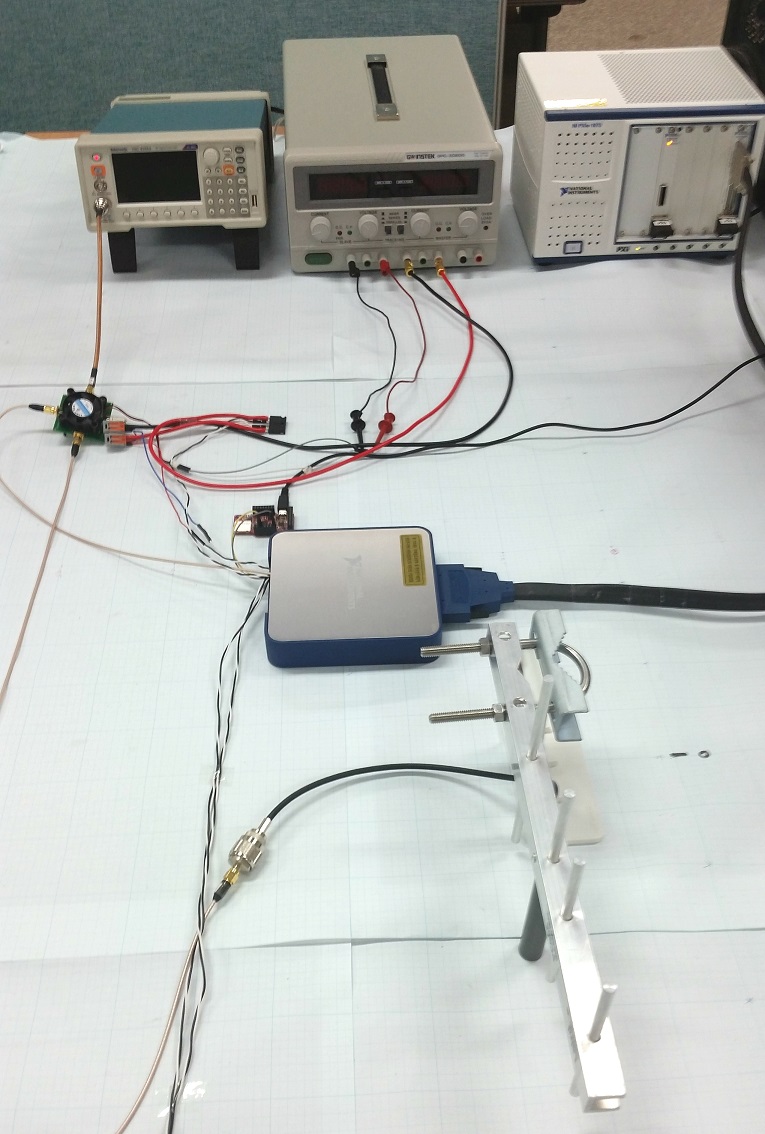}
        }~~~~
    \subfigure[Sensor node]{
        \label{fig:snpic}\includegraphics[width=3.15cm, bb=0.0in 0in 6.3in 10in] {./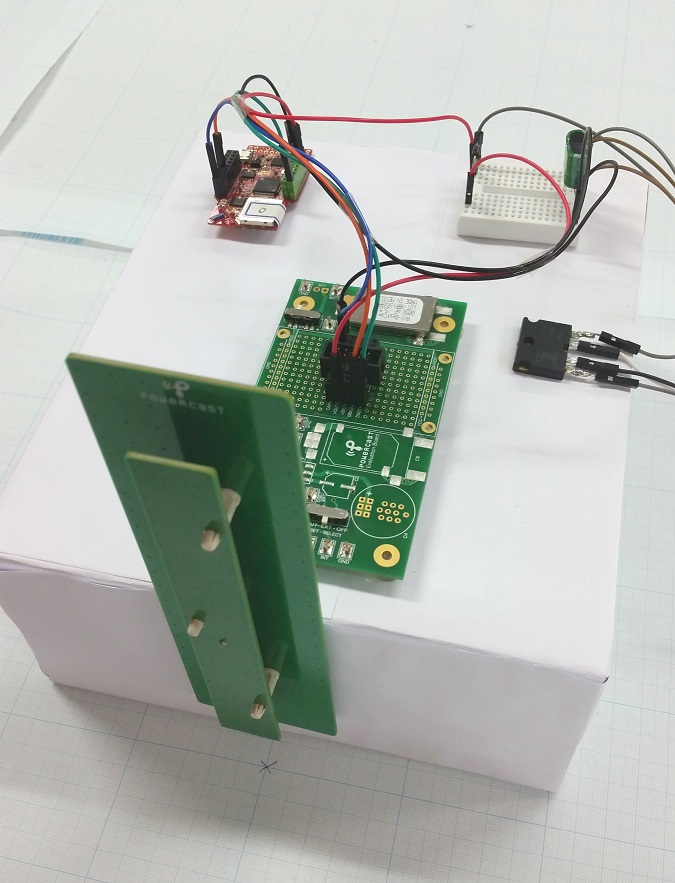}
        }
    \caption{Pictures of power beacon and sensor node in WPSN testbed.}
    \label{fig:testbedpic}
\end{figure}

We use two types of RF energy transfer channels for the testbed.
The first one is the antenna-based channel environment, in which the power beacon sends the RF signal by a Yagi antenna (i.e., Laird PC904N Yagi antenna) with 8 dBi gain, and the sensor node receives the RF signal by a PCB patch antenna with 6.1 dBi gain.
Although we can test actual wireless power transfer over the air in this antenna-based channel environment, it is difficult to obtain reproducible results since accurate control of the power attenuation is not possible.
Therefore, we use a step attenuator-based channel environment, in which the RF signal from the power beacon goes through a step attenuator to get to the sensor node.
In this channel environment, we can accurately control the power attenuation to simulate the real wireless channel with various distances.
Unless noted otherwise, we will mainly use the step attenuator-based channel for reproducibility.

The sensor node in the testbed consists of a wireless energy harvesting module, an energy storage module, and a sensor module with the same arrangement as in Fig.~\ref{fig:circuit}.
In the sensor node, a sensor module (i.e., Zolertia Z1 mote \cite{zolertiaz1}) draws current from a supercapacitor (i.e., Samxon DDL series) that stores energy charged by a wireless energy harvesting module (i.e., Powercast P1110 \cite{p1110}).
The wireless energy harvesting module has an internal RF power sensor and is capable of measuring the receive power.
The sensor module can acquire the received power from the wireless energy harvesting module by using analog-to-digital conversion (ADC).
The sensor module can also measure the sensor node voltage, which indicates the amount of the energy stored in the supercapacitor.
The sensor module adopts TI MSP430 as an MCU and TI CC2420 as an RF transceiver.
In Fig.~\ref{fig:snpic}, we show a picture of the sensor node in our testbed.

The power beacon and the sensor node exchange information by means of the IEEE 802.15.4 RF transceiver on the 2.4 GHz ISM band.
In the sensor node, the IEEE 802.15.4-compliant chip (i.e., TI CC2420) in the sensor module is used for communication.
On the other hand, in the power beacon, a Zolertia Z1 mote attached to the desktop computer via serial connection is used for communication.
By using this communication link, the sensor node can report the receive power and the sensor node voltage to the power beacon.
The power beacon can send a command to the sensor node via this communication link.

\subsection{Power Beacon and RF Energy Transfer Channel Measurement}

\begin{figure}
    \centering
    \subfigure[Amplifier power consumption and PAE]{
        \label{fig:ampcon}\includegraphics[width=4.1cm, bb=1.0in 0.3in 10.4in 7.4in] {./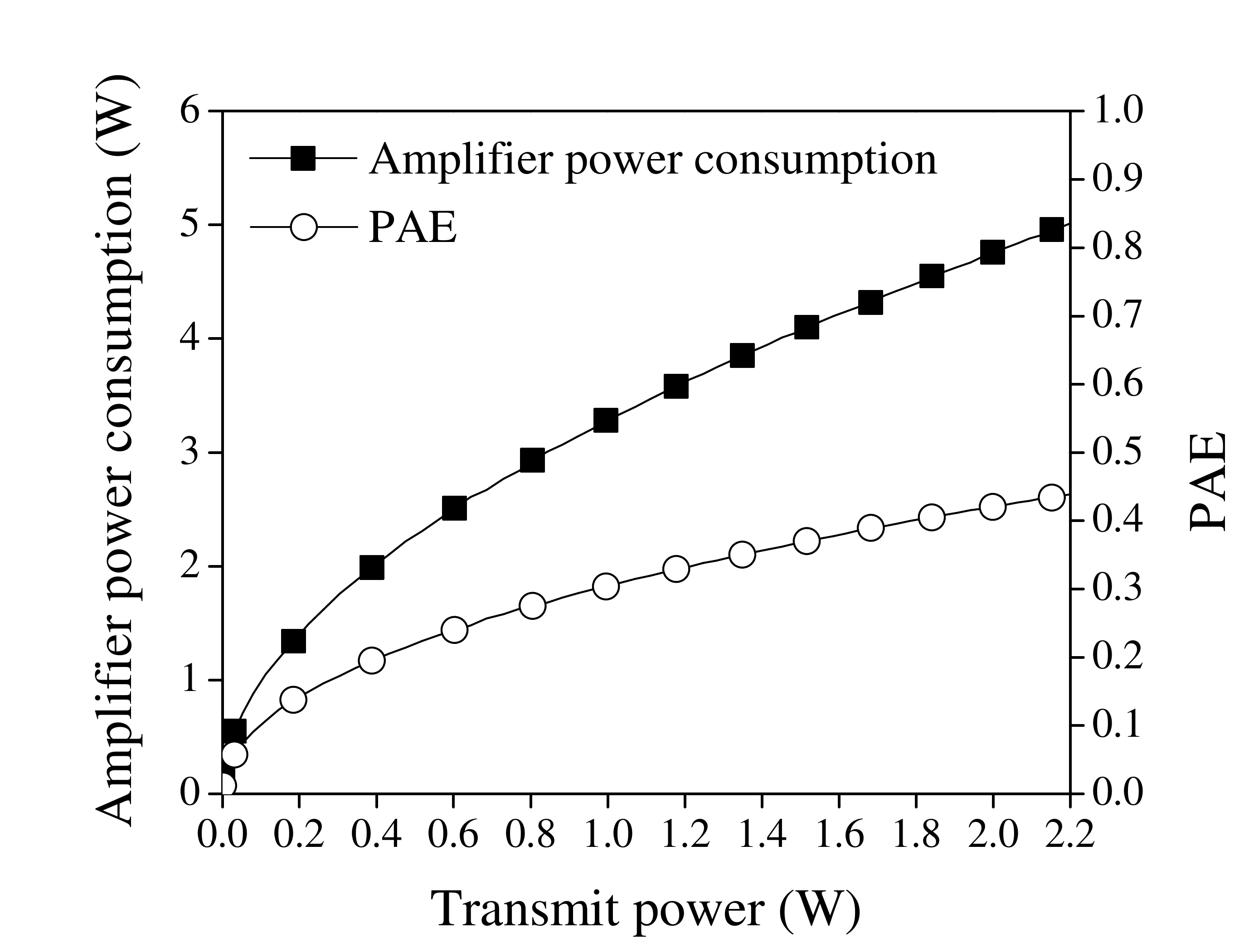}
        }~
    \subfigure[Power attenuation]{
        \label{fig:atten}\includegraphics[width=4.1cm, bb=0.7in 0.3in 9.4in 7.4in] {./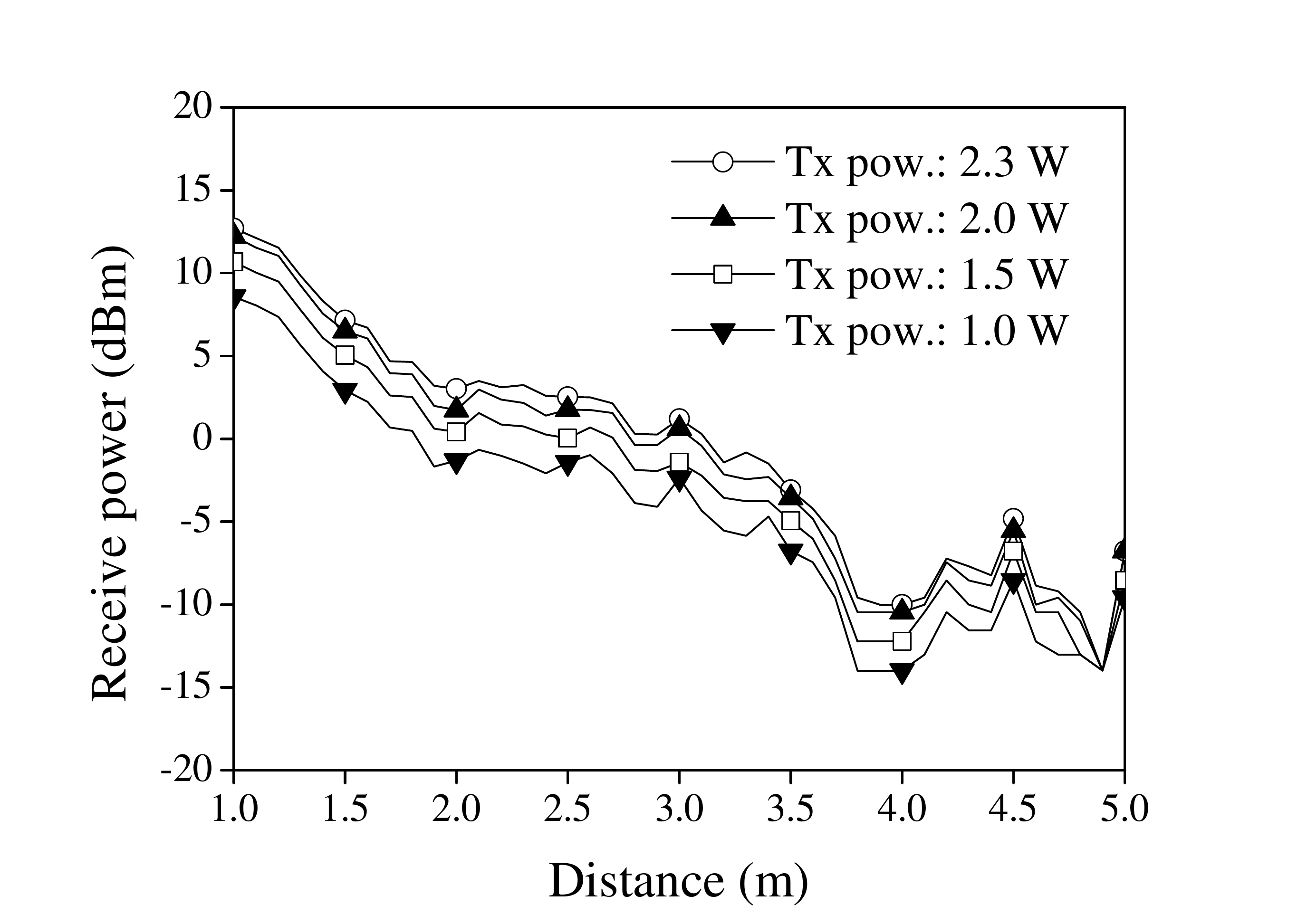}
        }
    \caption{Amplifier power consumption, PAE, and power attenuation.}
    \label{fig:ampconatten}
\end{figure}

We have conducted an experiment on the power amplifier to obtain the amplifier power consumption and the PAE according to the transmit power.
We set the power of the RF source signal to 6 dBm and the voltage of the DC power supply to 3.5 V.
Fig.~\ref{fig:ampcon} shows that the PAE function (i.e., $\theta$) is an increasing function of the transmit power.

In Fig.~\ref{fig:atten}, we show the power attenuation according to the distance in the antenna-based channel environment.
This figure plots the receive power for the given transmit power, from which we can derive the formula for the power attenuation according to the distance.
By regression analysis, the power attenuation is obtained as $h = \psi(d)= 0.01/d^{3.31}$, where $d$ is the distance in meter.

\subsection{Sensor Node Measurement}

\begin{figure}
    \centering
    \subfigure[I-V curve]{
        \label{fig:vigraph}\includegraphics[width=4.1cm, bb=1.0in 0.3in 9.5in 7.4in] {./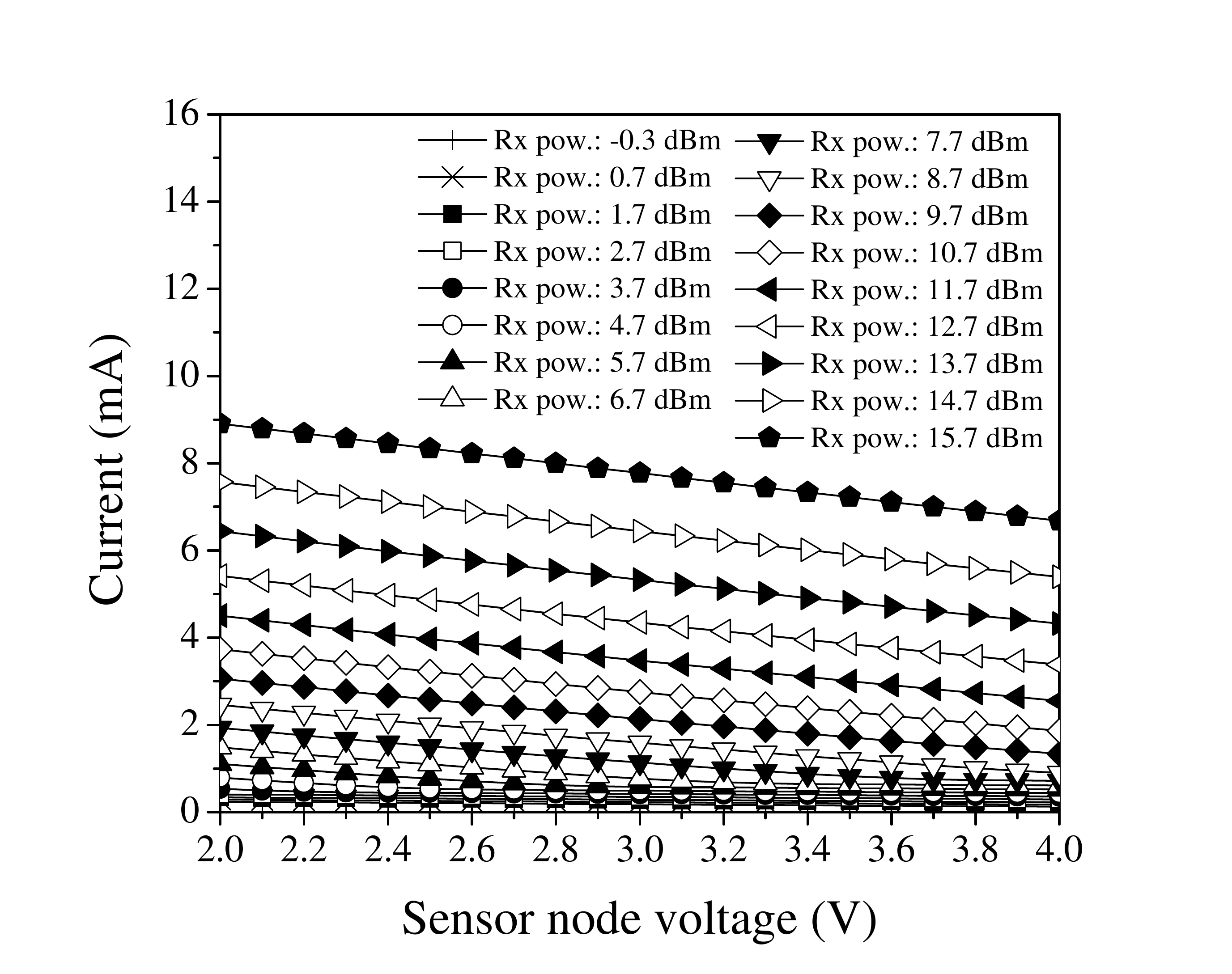}
        }~
    \subfigure[Efficiency]{
        \label{fig:harveff}\includegraphics[width=4.1cm, bb=1.0in 0.3in 9.5in 7.4in] {./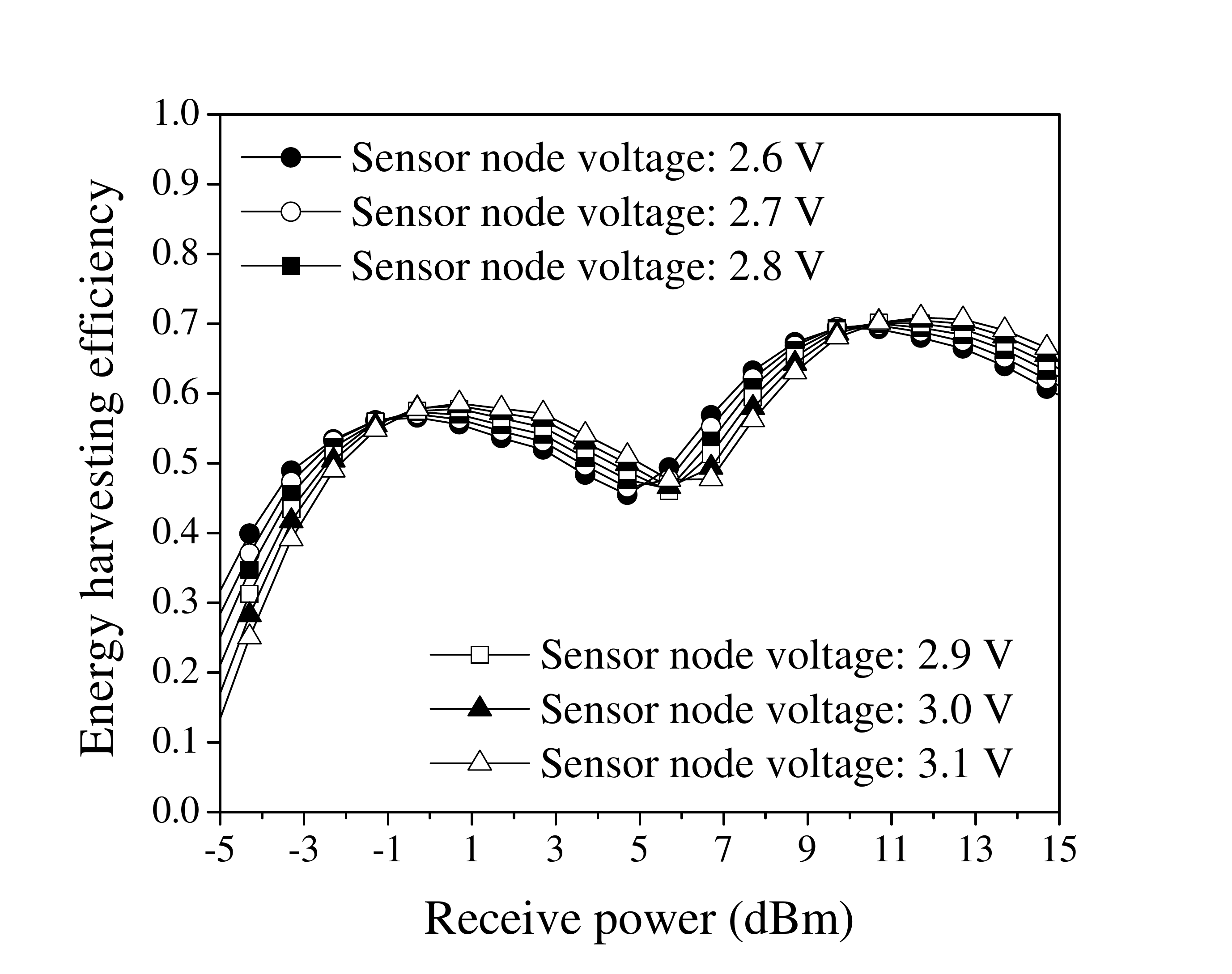}
        }
    \caption{I-V curve and efficiency of P1110 wireless energy harvester.}
    \label{fig:p1110}
\end{figure}

We have conducted tests on the wireless energy harvester (i.e., Powercast P1110) to obtain the I-V function $\rho$ and the wireless energy harvesting efficiency function $\eta_V$ for various receive power.
For the test, a signal generator is used to input a 920 MHz CW signal and an electronic load (i.e., Mayuno M9711) is utilized to measure the current according to various voltages.
Fig.~\ref{fig:vigraph} shows the measurement result of the I-V function of the wireless energy harvester.
The graphs in Fig.~\ref{fig:vigraph} are drawn for the receive power ranging from $-0.3$ dBm to $15.7$ dBm.
This figure defines the I-V function $\rho$ in \eqref{eq:ehiv} that can be used to calculate the current through the wireless energy harvester according to the sensor node voltage.
From the I-V function, we calculate the wireless energy harvesting efficiency as $\eta_V(V,p_\text{rx}) = V\rho(V,p_\text{rx})/p_\text{rx}$.
The wireless energy harvesting efficiency graph is plotted in Fig.~\ref{fig:harveff} as a function of the receive power.

\begin{figure}
    \centering
    \subfigure[Leakage of supercapacitor]{
        \label{fig:leakage}\includegraphics[width=4.1cm, bb=0.7in 0.3in 9.5in 7.4in] {./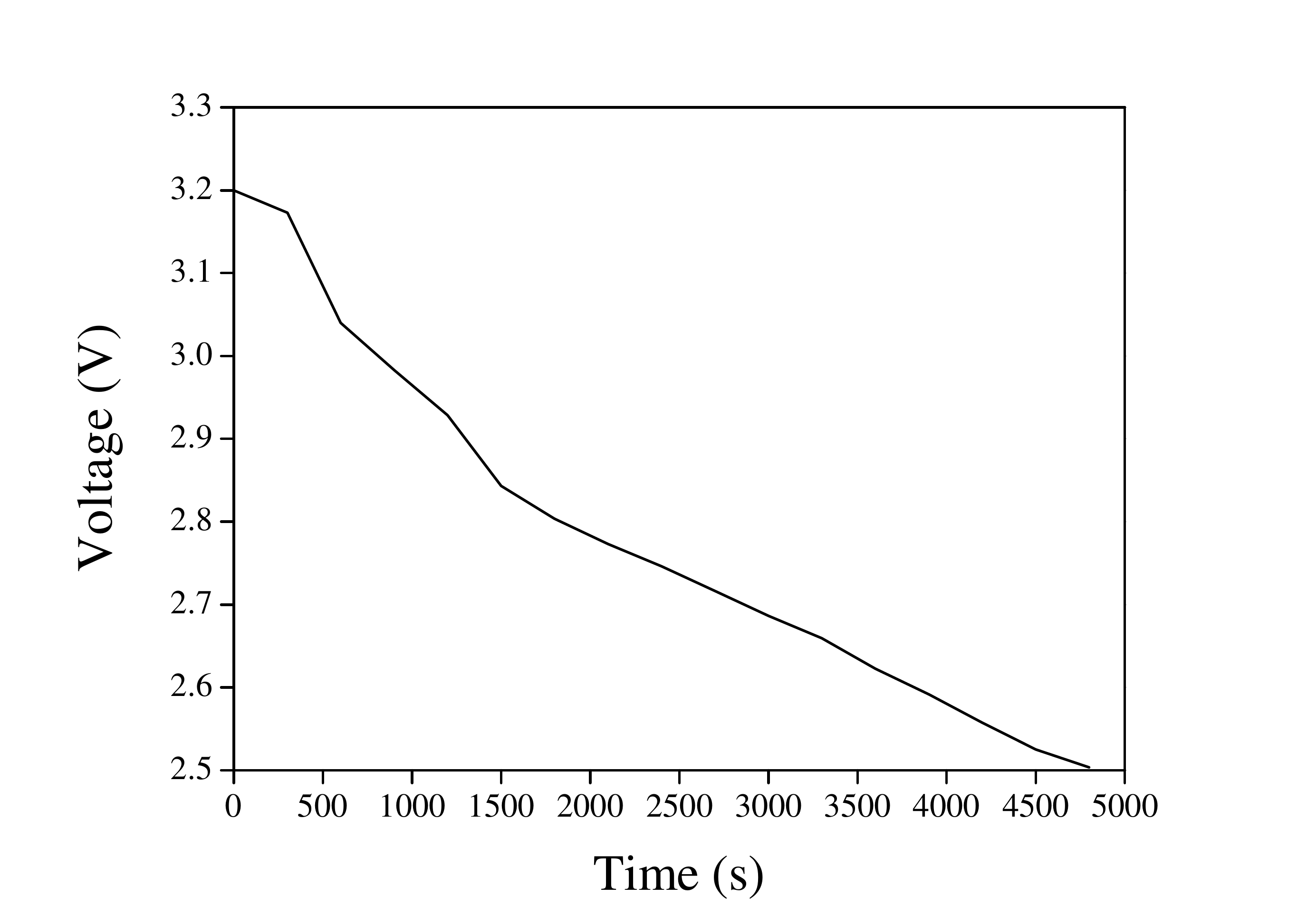}
        }~~
    \subfigure[Sensor module power consumption]{
        \label{fig:modepwr}\includegraphics[width=4.1cm, bb=0.7in 0.3in 9.5in 7.4in] {./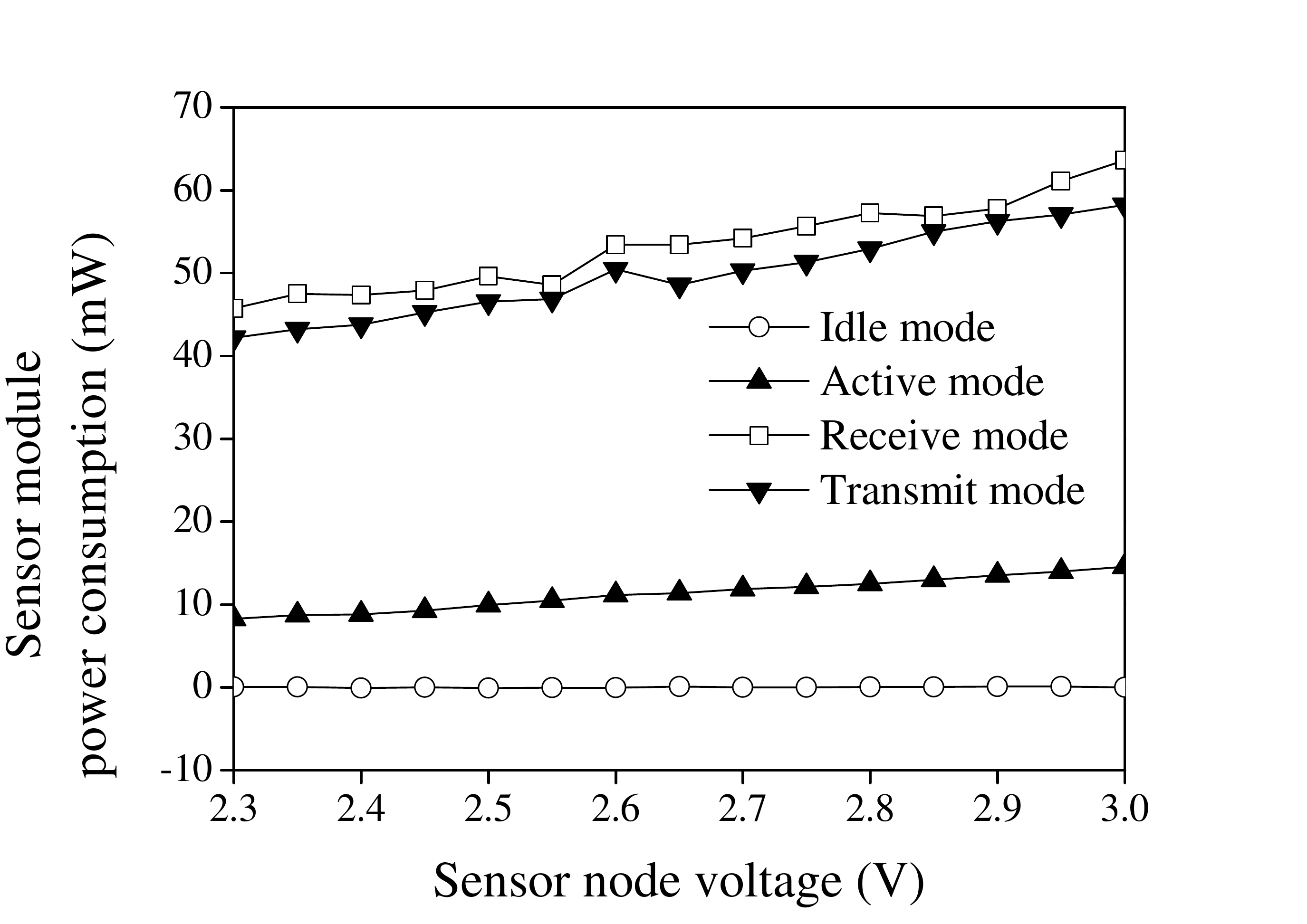}
        }
    \caption{Leakage of supercapacitor and sensor module power consumption.}
    \label{fig:leakagemodepwr}
\end{figure}

The parameters of the supercapacitor are obtained in the testbed as well.
The testbed uses a Samxon DDL series supercapacitor with the capacitance of $C=0.1$ F.
In Fig.~\ref{fig:leakage}, a leakage test is conducted by letting the supercapacitor discharge by itself.
The ideal capacitor and the leakage resistor in Fig.~\ref{fig:model} form an RC circuit when disconnected from other parts.
The voltage of an RC circuit evolves according to the equation $V(t) = V(0)\cdot \exp(-t/(R_\text{leak}C))$,
where $V(t)$ is the voltage at time $t$ and $V(0)$ is the initial voltage at time $t=0$.
By fitting this equation into the graph in Fig.~\ref{fig:leakage}, the leakage resistance is obtained as $R_\text{leak} = 196$ k$\Omega$.

In the testbed, the Zolertia Z1 mote is used as a sensor module.
We test the load characteristics of the Z1 mote for different modes of the sensor module.
In Fig.~\ref{fig:modepwr}, the power consumption of the Z1 mote is measured according to the voltage.
In the idle mode, very small power consumption is observed because of the leakage current of the ICs on the sensor module.
In the active mode, only the MCU (i.e., MSP430), which acts as a constant resistance load, is activated.
From Fig.~\ref{fig:modepwr}, the load resistance of the MSP430 is calculated to be 0.626 k$\Omega$.
In the receive and the transmit mode, the RF transceiver (i.e., CC2420) receives and transmits while the MSP430 is activated.
The power consumption of the CC2420 is obtained by subtracting out the power consumption of the MSP430 from the measured power consumption.
The CC2420 acts as a constant current load \cite{cc2420}, and the current consumption of the CC2420 is calculated to be 15.87 mA in the receive mode and 14.55 mA in the transmit mode.
These results are summarized in Table \ref{tab:chipmode}.
From this table, the resistance of the constant resistance load (CRL) (i.e., $\gamma(m)$) and the current of the constant current load (CCL) (i.e., $\zeta(m)$) are decided for each mode $m$.

\begin{table}
\caption{Measured parameters of the constant resistance and constant current loads for each mode of the Z1 mote.}
\centering
\begin{tabular}{|p{2.5em}|p{4em}|p{4em}|p{5em}|p{5em}|}\hline
\textbf{Load Type} & \textbf{Idle} & \textbf{Active} & \textbf{Receive} & \textbf{Transmit}\\
\hline\hline
CRL & $\infty$ & 0.626 k$\Omega$ & 0.626 k$\Omega$ & 0.626 k$\Omega$\\\hline
CCL & 0.035 mA & 0.035 mA & 15.87 mA & 14.55 mA \\\hline
\end{tabular}
\label{tab:chipmode}
\end{table}

\section{Energy Management Scheme for Energy Neutral Operation}\label{section:energy management}

\subsection{Description of Energy Management Scheme}

\begin{figure}
	\centering
    \includegraphics[width=6cm, bb=1.5in 3.4in 6.4in 7.3in]{./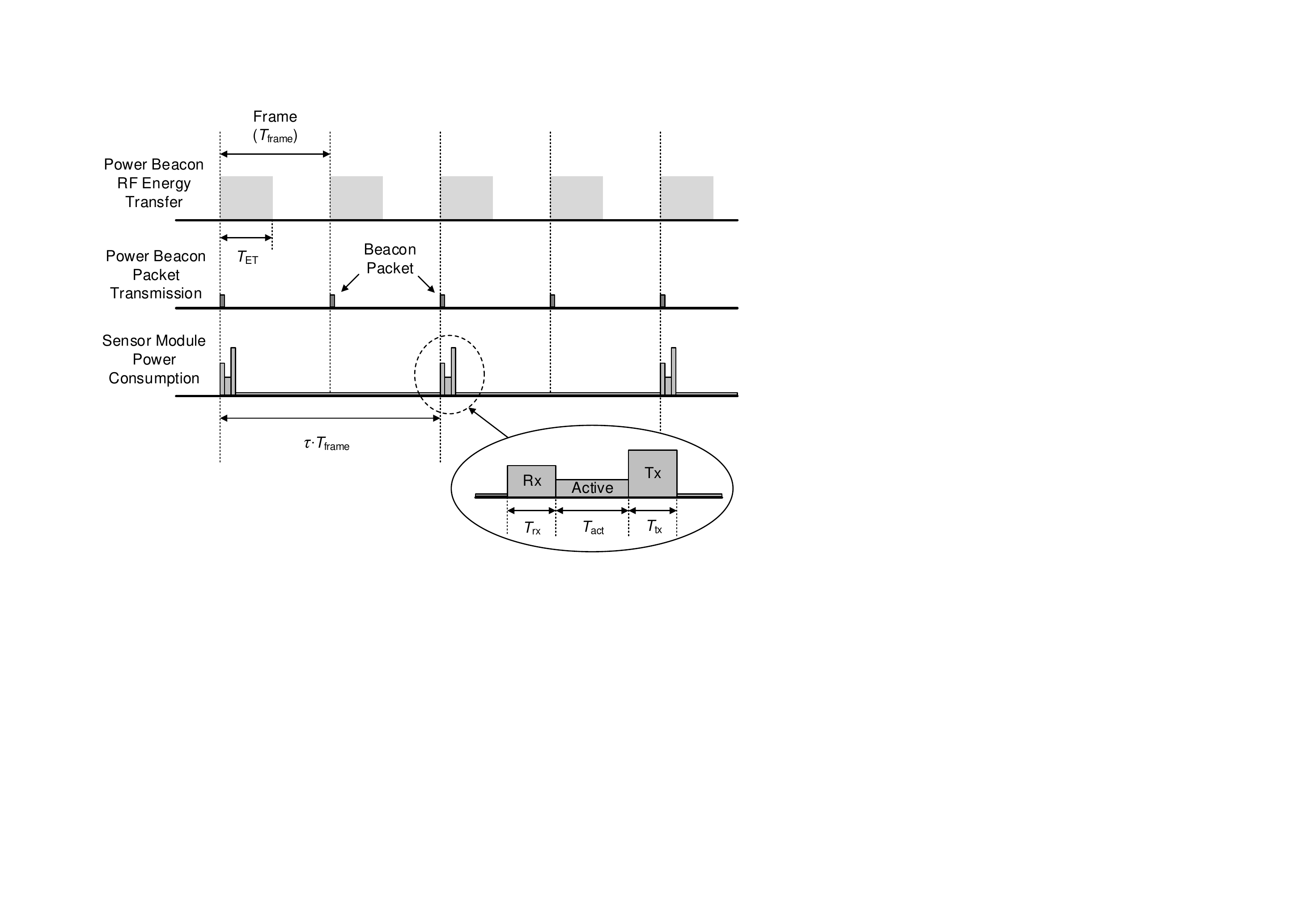}
    \caption{Timing diagram of the proposed energy management scheme.}
    \label{fig:energymanagement}
\end{figure}

In this section, we propose an energy management scheme for the WPSN.
The targets of the proposed energy management scheme are threefold.
First, the energy management scheme minimizes the amplifier power consumption in the power beacon.
Second, the energy management scheme guarantees that the power beacon receives sensor measurement reports from the sensor node as frequently as required by a sensor application.
Third, the energy management scheme maintains the stored energy $E$ over the minimum stored energy $E_\text{min}$ for assuring continuous operation of the sensor node.
These three goals are generally conflicting with each other.
Therefore, it is required to carefully design the energy management scheme so that these three goals are optimally balanced.

In the proposed scheme, both the power beacon and the sensor node perform duty cycling.
The timing diagram of the proposed energy management scheme is illustrated in Fig.~\ref{fig:energymanagement}.
The basic time unit for all operations of the energy management scheme is a frame, whose length is denoted by $T_\text{frame}$.
Each frame is indexed by $k$.
The power beacon performs duty cycling on a frame-by-frame basis.
During each frame, the power beacon turns on the amplifier for $T_\text{ET}$ and turns off the amplifier for the rest of the frame.
Then, the amplifier duty cycle, denoted by $\alpha$, is given by
\begin{align}
\alpha = T_\text{ET}/T_\text{frame}.
\end{align}

When the amplifier is turned on, the transmit power is set to $p_\text{tx} = \Upsilon$.
Hereafter, we will call $\Upsilon$ energy transfer power.
The energy transfer power satisfies $0 < \Upsilon \le \Upsilon_\text{max}$, where $\Upsilon_\text{max}$ denotes the maximum energy transfer power.
From \eqref{eq:ampcons}, the amplifier power consumption is given by
\begin{align}\label{eq:ampcons2}
p_\text{cons} = \Upsilon/\theta(\Upsilon),
\end{align}
when the amplifier is turned on.
On the other hand, when the amplifier is turned off, the transmit power and the amplifier power consumption are both zero.
Let $\alpha(k)$ and $\Upsilon(k)$ denote the amplifier duty cycle and the energy transfer power in frame $k$.

The sensor node performs duty cycling that periodically wakes up the sensor node for a short time and then puts the sensor node into the idle mode to minimize the sensor module power consumption.
A frame is a basic time unit of the duty cycling of the sensor node.
For each frame, the sensor node can be in the awake state or in the sleep state.
If the sensor node is in the awake state during a frame, the sensor node goes through the receive, active, and transmit modes in sequence for $T_\text{rx}$, $T_\text{act}$, and $T_\text{tx}$, respectively, and it goes into the idle mode for the rest of the frame during $T_\text{idle}=T_\text{frame}-T_\text{rx}-T_\text{act}-T_\text{tx}$, as shown in Fig.~\ref{fig:energymanagement}.
On the other hand, if the sensor node is in the sleep state during a frame, it stays in the idle mode for the entire frame.
Let $a(k)$ denote an awake indicator that is 1 if the sensor node is in the awake state in frame $k$; and is 0, otherwise.

Suppose that the sensor node is in the awake state in frame $k$ (i.e., $a(k)=1$).
Then, the sensor node sets up the wake-up timer before it goes into the idle mode so that it can wake up again after the wake-up interval.
The wake-up interval in frame $k$ is denoted by $\tau(k)$.
Let $\sigma(k)$ denote the remaining number of frames until the sensor node wakes up again.
In frame $k$ such that $\sigma(k)>0$, the sensor node is in the sleep state (i.e., $a(k)=0$).
On the other hand, if $\sigma(k)=0$, the sensor node is in the awake state in frame $k$ (i.e., $a(k)=1$).
In each frame $k$ such that $\sigma(k)>0$, $\sigma(k)$ counts down by one (i.e., $\sigma(k+1)=\sigma(k)-1$ if $\sigma(k)>0$).
If $\sigma(k)$ becomes zero, the sensor node wakes up (i.e., $a(k)=1$) and it sets $\sigma(k+1)$ to $(\tau(k)-1)$ so that it wakes up again after $\tau(k)$ frames (i.e., $\sigma(k+1)=\tau(k)-1$ if $\sigma(k)=0$).

At the start of each frame, the power beacon sends a beacon packet that contains a control command to the sensor node.
When the sensor node wakes up at the start of a frame, the sensor node is in the receive mode for time duration $T_\text{rx}$ to receive a beacon packet from the power beacon.
The sensor node wakes up slightly before the start of a frame to make sure the sensor node safely receives a beacon packet.
After successful reception of a beacon packet, the sensor node turns off the RF transceiver and the mode of the sensor node is changed to the active mode.

In the active mode, the sensor node performs sensor measurements and computation for time duration $T_\text{act}$.
The sensor measurements include the receive power, the stored energy, and other sensor measurements (e.g., temperature).
During the time that the sensor node measures the receive power, the amplifier in the power beacon should be turned on.
Therefore, the energy management scheme makes the amplifier duty cycle larger than a predefined minimum amplifier duty cycle $\alpha_\text{min}$ so that the amplifier is turned on while the sensor node is in the active mode.
The receive power measurement in frame $k$ is denoted by $\overline{S}(k)$.
In addition, the sensor node measures the sensor node voltage and calculates the stored energy from \eqref{eq:sten}.
The stored energy measurement in frame $k$ is denoted by $\overline{E}(k)$.
The sensor node prepares the sensor measurement report that includes $\overline{S}(k)$, $\overline{E}(k)$, and other sensor measurements.
Then, the sensor node sends a data packet containing the sensor measurement report in the transmit mode for time duration $T_\text{tx}$.
After the packet transmission is over, the sensor node is put into the idle mode.
The sensor node wakes up again after the wake-up interval.

The proposed energy management scheme has three control parameters: the amplifier duty cycle, the energy transfer power, and the wake-up interval.
The energy management scheme can dynamically decide the amplifier duty cycle, the energy transfer power, and the wake-up interval for each frame in which the sensor node is in the awake state.
This decision is based on the receive power measurement $\overline{S}(k)$ and the stored energy measurement $\overline{E}(k)$ included in the sensor measurement report from the sensor node.
The controller sends the decided wake-up interval to the sensor node by enclosing it in the control command in the beacon packet, and the sensor node adjusts the wake-up interval according to the received control command.

\subsection{Discrete-Time Stored Energy Evolution Model of Energy Management Scheme}\label{section:evolutionmodel}

In this subsection, we derive the discrete-time stored energy evolution model when the proposed energy management scheme is applied.
The stored energy at the start of frame $k$ is denoted by $E(k)$.
From the continuous-time energy transition function in \eqref{eq:evol2}, the discrete-time energy evolution formula is obtained as
\begin{align}
\begin{split}
E(k+1) &= \min\big\{E(k) + \mbox{$\int^{t(k+1)}_{t(k)}$} \big(\phi(E^{(t)},p_\text{rx}^{(t)})\\
&\quad\quad - \xi_{m^{(t)}}(E^{(t)}) - \xi_\text{leak}(E^{(t)})\big)\mathrm{d}t, E_\text{max}\big\},
\end{split}
\end{align}
where $t(k)$ is the time at the start of frame $k$, and $E^{(t)}$, $p_\text{rx}^{(t)}$, and $m^{(t)}$ are the stored energy, the receive power, and the mode of the sensor module at time $t$, respectively.

To simplify the discrete-time energy evolution formula, we assume that the variation of $E^{(t)}$ during one frame does not much affect the value of $\phi(E^{(t)},p_\text{rx}^{(t)})$, $\xi_{m^{(t)}}(E^{(t)})$, and $\xi_\text{leak}(E^{(t)})$.
Then, we can assume that $\phi(E(k),p_\text{rx}^{(t)})=\phi(E^{(t)},p_\text{rx}^{(t)})$, $\xi_{m^{(t)}}(E(k))=\xi_{m^{(t)}}(E^{(t)})$, and $\xi_\text{leak}(E(k))=\xi_\text{leak}(E^{(t)})$ for $t(k) \le t < t(k+1)$.
This assumption is valid when the capacitance of the supercapacitor (i.e., $C$) is sufficiently large.
Since $V=\sqrt{2E/C}$, the large capacitance makes the sensor node voltage changes slowly, and we can consider that the sensor node voltage does not change during a frame.
Then, the above assumption holds since the harvested power, the sensor module power consumption, and the supercapacitor leakage power are all functions of the sensor node voltage.

Under this assumption, the discrete-time energy evolution is given by
\begin{align}\label{eq:discenevol}
\begin{split}
E(k+1)& = \min\{E(k)+\Phi(E(k),\alpha(k),\Upsilon(k),h)\\
&\qquad\quad-\Theta(E(k),a(k)),\ E_\text{max}\},
\end{split}
\end{align}
where $\Phi(E,\alpha,\Upsilon,h)$ is the harvested energy and $\Theta(E,a)$ is the consumed energy during one frame when the stored energy is $E$, the amplifier duty cycle is $\alpha$, the energy transfer power is $\Upsilon$, the power attenuation is $h$, and the awake indicator is $a$.
In \eqref{eq:discenevol}, the harvested energy is defined as
\begin{align}\label{eq:harven}
\Phi(E,\alpha,\Upsilon,h) = \alpha\cdot\eta_E(E,h\Upsilon)\cdot h\Upsilon\cdot T_\text{frame},
\end{align}
and the consumed energy is defined as
\begin{align}\label{eq:consen}
\begin{split}
&\Theta(E,a)\\
&=a\cdot\big(\mbox{$\sum_{m\in\{\text{rx,act,tx,idle}\}}$}\xi_m(E)\cdot T_m + \xi_\text{leak}(E)\cdot T_\text{frame}\big)\\
&\quad+ (1-a)\cdot(\xi_\text{idle}(E) + \xi_\text{leak}(E)))\cdot T_\text{frame}\\
&= \varphi(E) + \delta(E)\cdot a,
\end{split}
\end{align}
where $\varphi(E) = (\xi_\text{idle}(E) + \xi_\text{leak}(E))\cdot T_\text{frame}$ and $\delta(E) = \mbox{$\sum_{m\in\{\text{rx,act,tx}\}}$}(\xi_m(E)-\xi_\text{idle}(E))\cdot T_m$.

The power amplifier at the power beacon consumes DC power according to \eqref{eq:ampcons2} only when it is turned on.
Therefore, the average amplifier power consumption is a function of the amplifier duty cycle and the energy transfer power.
The average amplifier power consumption is defined as
\begin{align}\label{eq:omega}
\Omega(\alpha,\Upsilon) = \alpha\cdot (\Upsilon/\theta(\Upsilon)).
\end{align}

\subsection{Optimal Energy Transfer Strategy}\label{section:optstrategy}

The proposed energy management scheme aims to minimize the average amplifier power consumption $\Omega(\alpha,\Upsilon)$ while maintaining the wake-up interval $\tau$ to the target wake-up interval $\tau_\text{tgt}$.
In doing so, the energy management scheme stabilizes the stored energy $E$ at the target stored energy $E_\text{tgt}$, which is higher than the minimum stored energy $E_\text{min}$.

Let us define the awake frame ratio, denoted by $r$, as the ratio of the frames in which the sensor node is in the awake state.
Since the sensor node is in the awake state in one frame out of $\tau$ frames, the awake frame ratio is $r = 1/\tau$.
Then, the average consumed energy is given by
\begin{align}
\begin{split}
Q(E,r) &= r\cdot \Theta(E,1)- (1-r)\cdot \Theta(E,0)\\
&= \varphi(E) + \delta(E)\cdot r.
\end{split}
\end{align}

While the stored energy and the wake-up interval are maintained to $E_\text{tgt}$ and $\tau_\text{tgt}$, respectively, the energy management scheme finds the optimal amplifier duty cycle $\alpha^*$ and the optimal energy transfer power $\Upsilon^*$ of the following optimization problem:
\begin{align}
&&&\text{minimize} & &\Omega(\alpha,\Upsilon)&&&&\label{eq:opttarget}\\
&&&\text{subject to} & &\Phi(E_\text{tgt},\alpha,\Upsilon,h) \ge Q(E_\text{tgt},r_\text{tgt}), &&&&\label{eq:optconst}
\end{align}
where $r_\text{tgt}=1/\tau_\text{tgt}$, $\alpha_\text{min}\le \alpha\le 1$, and $0< \Upsilon \le \Upsilon_\text{max}$.
Note that the optimization problem \eqref{eq:opttarget} and \eqref{eq:optconst} is not convex.

The Lagrangian of the optimization problem \eqref{eq:opttarget} and \eqref{eq:optconst} is given by
\begin{align}\label{eq:lagrangian}
\begin{split}
&L(\alpha,\Upsilon,\mu)\\
&= \Omega(\alpha,\Upsilon) - \mu\Phi(E_\text{tgt},\alpha,\Upsilon,h)+\mu Q(E_\text{tgt},r_\text{tgt})\\
&=\alpha\Upsilon (\theta(\Upsilon)^{-1} - \mu\cdot \eta_E(E_\text{tgt},h\Upsilon) h)+\mu Q(E_\text{tgt},r_\text{tgt}),
\end{split}
\end{align}
where $\mu \ge 0$ is the Lagrange multiplier.
The dual function is $g(\mu) = \min_{\alpha,\Upsilon} L(\alpha,\Upsilon,\mu)$.
It is known that the dual function is always smaller than or equal to the optimal value of \eqref{eq:opttarget} and \eqref{eq:optconst}, that is, $g(\mu)\le \Omega(\alpha^*,\Upsilon^*)$ for $\mu \ge 0$.

Let $\widehat{\mu}$ denote the minimum $\mu$ that satisfies $\theta(\Upsilon)^{-1} - \mu\cdot \eta_E(E_\text{tgt},h\Upsilon) h = 0$ for some $\Upsilon$ over $0< \Upsilon \le \Upsilon_\text{max}$.
For such $\widehat{\mu}$, we define $\widehat{\Upsilon}$ that satisfies $\theta(\widehat{\Upsilon})^{-1} - \widehat{\mu}\cdot \eta_E(E_\text{tgt},h\widehat{\Upsilon}) h = 0$.
Henceforth, we will call $\widehat{\Upsilon}$ maximum efficiency energy transfer power.
In addition, we define the maximum efficiency receive power as $\widehat{S} = h\widehat{\Upsilon}$ and the maximum efficiency harvested power as $\widehat{H} = \eta_E(E_\text{tgt},\widehat{S})\cdot \widehat{S}$.

The optimal solutions of \eqref{eq:opttarget} and \eqref{eq:optconst} satisfy the following theorem.
\begin{theorem}\label{theorem:optimal}
If $Q(E_\text{tgt},r_\text{tgt}) \le \widehat{H}$, the optimal solutions are $\alpha^* =  Q(E_\text{tgt},r_\text{tgt})/\widehat{H}$ and $\Upsilon^* = \widehat{\Upsilon}$.
\end{theorem}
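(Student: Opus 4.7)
The plan is to invoke weak Lagrangian duality at the specific multiplier $\mu = \widehat{\mu}$ already singled out in the text. The intuition is that for any fixed $\Upsilon$, both the objective $\Omega$ and the harvested energy $\Phi$ are linear in $\alpha$, so at an optimum the constraint \eqref{eq:optconst} must hold with equality, and the remaining one-dimensional problem reduces to maximizing $\theta(\Upsilon)\,\eta_E(E_\text{tgt},h\Upsilon)$ over $\Upsilon$; this maximum is precisely the content of the definition of $\widehat{\Upsilon}$, after which $\alpha^*$ is pinned down by tightness. I will make this rigorous by exhibiting $\widehat{\mu}$ as a dual certificate.

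\emph{Dual step.} Rewriting \eqref{eq:lagrangian},
\begin{align*}
L(\alpha,\Upsilon,\widehat{\mu}) = \alpha\Upsilon\bigl[\theta(\Upsilon)^{-1} - \widehat{\mu}\,\eta_E(E_\text{tgt},h\Upsilon)\,h\bigr] + \widehat{\mu}\,Q(E_\text{tgt},r_\text{tgt}).
\end{align*}
Reinterpret $\widehat{\mu}$ as the minimum over $\Upsilon\in(0,\Upsilon_\text{max}]$ of the continuous ratio $F(\Upsilon) = 1/[\theta(\Upsilon)\,\eta_E(E_\text{tgt},h\Upsilon)\,h]$, attained at $\widehat{\Upsilon}$. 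Then $F(\Upsilon)\ge\widehat{\mu}$ on the whole admissible range, which is exactly the statement that the bracketed coefficient is nonnegative and vanishes only at $\widehat{\Upsilon}$. Since $\alpha\ge\alpha_\text{min}>0$ and $\Upsilon>0$, the infimum of the $\alpha\Upsilon$-term is $0$, attained at $\Upsilon=\widehat{\Upsilon}$, so $g(\widehat{\mu}) = \widehat{\mu}\,Q(E_\text{tgt},r_\text{tgt})$ and by weak duality every feasible $(\alpha,\Upsilon)$ satisfies $\Omega(\alpha,\Upsilon)\ge \widehat{\mu}\,Q(E_\text{tgt},r_\text{tgt})$.

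\emph{Primal step.} Feasibility of $(\alpha^*,\Upsilon^*) = (Q(E_\text{tgt},r_\text{tgt})/\widehat{H},\widehat{\Upsilon})$: $\widehat{\Upsilon}\in(0,\Upsilon_\text{max}]$ by definition, the hypothesis $Q(E_\text{tgt},r_\text{tgt})\le \widehat{H}$ yields $\alpha^*\le 1$, and $\alpha^*\ge\alpha_\text{min}$ is the intended operating regime. Plugging into \eqref{eq:harven} and using $\widehat{H}=\eta_E(E_\text{tgt},h\widehat{\Upsilon})\,h\,\widehat{\Upsilon}$ shows \eqref{eq:optconst} holds with equality. Applying $\theta(\widehat{\Upsilon})^{-1} = \widehat{\mu}\,\eta_E(E_\text{tgt},h\widehat{\Upsilon})\,h = \widehat{\mu}\,\widehat{H}/\widehat{\Upsilon}$ from the defining equation of $\widehat{\mu}$,
\begin{align*}
\Omega(\alpha^*,\widehat{\Upsilon}) = \alpha^*\,\widehat{\Upsilon}/\theta(\widehat{\Upsilon}) = \alpha^*\,\widehat{\mu}\,\widehat{H} = \widehat{\mu}\,Q(E_\text{tgt},r_\text{tgt}) = g(\widehat{\mu}),
\end{align*}
so the dual lower bound is attained and $(\alpha^*,\Upsilon^*)$ is optimal.

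\emph{Main obstacle.} The delicate point is justifying that the bracketed Lagrangian coefficient stays $\ge 0$ on the \emph{entire} admissible range at $\mu=\widehat{\mu}$, rather than only at $\widehat{\Upsilon}$. The paper's verbal description (``minimum $\mu$ for which equality holds somewhere'') sweeps continuity and attainment under the rug, so I would explicitly recast $\widehat{\mu}$ as $\min_\Upsilon F(\Upsilon)$, invoking continuity of $\theta$ and $\eta_E$ (consistent with the measured curves in Figs.~\ref{fig:ampcon} and \ref{fig:harveff}) plus attainment on the closed interval $(0,\Upsilon_\text{max}]$. Once that equivalent definition is in place the inequality is tautological and the remainder is a line of algebra; nonconvexity of the primal, which the text flags, is handled precisely by this dual-certificate route rather than by first-order conditions.
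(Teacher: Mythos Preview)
Your proof is correct and follows essentially the same weak-duality route as the paper: exhibit $\widehat{\mu}$ as a dual certificate, show the candidate $(\widehat{\alpha},\widehat{\Upsilon})$ is primal feasible with tight constraint, and verify that its objective value equals $g(\widehat{\mu})$. The paper compresses your entire dual step into the single clause ``According to the definition of $\widehat{\Upsilon}$, we have $L(\widehat{\alpha},\widehat{\Upsilon},\widehat{\mu})=g(\widehat{\mu})$''; your explicit recasting of $\widehat{\mu}$ as $\min_\Upsilon F(\Upsilon)$ is precisely the content hidden in that clause, so you have filled in what the paper left implicit rather than taken a different path.
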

\begin{proof}
See Appendix \ref{proof:optimal}.
\end{proof}

The maximum efficiency energy transfer power $\widehat{\Upsilon}$ mainly depends on two efficiency functions, the wireless energy harvesting efficiency function $\eta_E$ and the PAE function $\theta$.
Theorem \ref{theorem:optimal} states that, if $Q(E_\text{tgt},r_\text{tgt})$ is less than or equal to a threshold $\widehat{H}$, it is efficient to send the RF energy with the maximum efficiency energy transfer power $\widehat{\Upsilon}$ and to set the amplifier duty cycle less than one so that the amplifier power consumption is minimized while the average consumed energy is supported.

\subsection{Adaptive Energy Management Algorithm}\label{section:emalg}

In this subsection, we propose an adaptive energy management algorithm that controls the amplifier duty cycle $\alpha(k)$, the energy transfer power $\Upsilon(k)$, and the wake-up interval $\tau(k)$.
The energy management algorithm updates $\alpha(k)$, $\Upsilon(k)$, and $\tau(k)$ only in the frame in which the sensor node is in the awake state (i.e., frame $k$ such that $a(k)=1$).
The start of frame $k$ such that $a(k)=1$ is defined as an energy management epoch, and the $i$th energy management epoch will be called epoch $i$.
Let $E_i$, $\alpha_i$, $\Upsilon_i$, $\tau_i$, $\overline{E}_i$, and $\overline{S}_i$ denote the stored energy, the amplifier duty cycle, the energy transfer power, the wake-up interval, the stored energy measurement, and the receive power measurement at epoch $i$, respectively.
Then, the stored energy evolves over epochs according to the following formula.
\begin{align}\label{eq:epochenevol}
\begin{split}
E_{i+1}& = \min\{E_i+(\Phi(E_i,\alpha_i,\Upsilon_i,h)\\
&\qquad\qquad\qquad\quad-Q(E_i,1/\tau_i))\cdot\tau_i,\ E_\text{max}\},
\end{split}
\end{align}

The power beacon controls $\alpha_i$, $\Upsilon_i$, and $\tau_i$ based on the receive power measurement $\overline{S}_i$ and the stored energy measurement $\overline{E}_i$ enclosed in the sensor measurement report from the sensor node.
Note that this algorithm treats the wake-up interval $\tau_i$ as a real number rather than as an integer.
Therefore, the wake-up interval obtained by this algorithm should be rounded off when it is actually applied.
The first priority of the energy management algorithm is to maintain the stored energy $E_i$ to the target stored energy $E_\text{tgt}$.
The second priority is to keep the wake-up interval $\tau_i$ to the target wake-up interval $\tau_\text{tgt}$.
As long as the above two targets are satisfied, the algorithm tries to minimize the average amplifier power consumption according to Theorem \ref{theorem:optimal}.

According to the average consumed energy, $Q(E_\text{tgt},r_\text{tgt})$, required for achieving $E_\text{tgt}$ and $\tau_\text{tgt}$, we can consider the following three cases.
In Case I, it is satisfied that $Q(E_\text{tgt},r_\text{tgt}) \le \widehat{H}$.
This case corresponds to Theorem \ref{theorem:optimal}.
This is the most common case if the wireless energy harvesting module is designed in such a way that the maximum energy harvesting efficiency is achieved in the desired operating condition.
To achieve the optimality in Case I, $\alpha_i$ should be controlled to maintain the target stored energy while $\Upsilon_i$ is set to $\widehat{\Upsilon}$, according to Theorem \ref{theorem:optimal}.
In Case II, we have $\widehat{H} < Q(E_\text{tgt},r_\text{tgt}) \le \eta_E(E_\text{tgt},h\Upsilon_\text{max})\cdot h\Upsilon_\text{max}$.
In this case, $Q(E_\text{tgt},r_\text{tgt})$ cannot be supported with $\Upsilon_i = \widehat{\Upsilon}$ and $\alpha_i = 1$.
Therefore, the proposed algorithm controls $\Upsilon_i$ to a value higher than $\widehat{\Upsilon}$ while it sets $\alpha_i=1$.
In Case III, we have $Q(E_\text{tgt},r_\text{tgt}) > \eta_E(E_\text{tgt},h\Upsilon_\text{max})\cdot h\Upsilon_\text{max}$.
In this case, $Q(E_\text{tgt},r_\text{tgt})$ cannot be supported even with $\Upsilon_i = \Upsilon_\text{max}$ and $\alpha_i = 1$.
The only way to maintain the stored energy is to adjust $\tau_i$ to a value higher than $\tau_\text{tgt}$ to reduce the average consumed energy.

The proposed algorithm controls $\alpha_i$ for Case I, $\Upsilon_i$ for Case II, and $\tau_i$ for Case III.
Since one parameter is controlled at a time in all three cases, we can define one control variable $x_i$ that is mapped to the tuple of three parameters $(\alpha_i,\Upsilon_i,\tau_i)$.
The mapping function is ${\boldsymbol \omega(x)} = (\omega_\alpha(x),\omega_\Upsilon(x),\omega_\tau(x))$, which is defined as
\begin{align}\label{eq:omegax}
\begin{split}
{\boldsymbol \omega(x)}
=\begin{cases}
(x+\alpha_\text{min},\widehat{\Upsilon},\tau_\text{tgt}),&\text{if }0\le x\le \kappa_1\\
(1,\beta_\Upsilon(x-\kappa_1)+\widehat{\Upsilon},\tau_\text{tgt}),&\text{if }\kappa_1< x\le \kappa_2\\
(1,\Upsilon_\text{max},\beta_\tau(x-\kappa_2)+\tau_\text{tgt}),&\text{if }x> \kappa_2,
\end{cases}
\end{split}
\end{align}
where $\kappa_1 = 1-\alpha_\text{min}$, $\kappa_2 = (\Upsilon_\text{max}-\widehat{\Upsilon})/\beta_\Upsilon+1-\alpha_\text{min}$, and $\beta_\Upsilon$ and $\beta_\tau$ are positive constants.
The stored energy evolution in \eqref{eq:epochenevol} can be rewritten as a function of $x$ as follows.
\begin{align}\label{eq:epochenevol2}
\begin{split}
E_{i+1}& = \min\{E_i+\Delta(E_i,x_i),E_\text{max}\},
\end{split}
\end{align}
where $\Delta(E,x)$ is defined as
\begin{align}\label{eq:delta}
\begin{split}
&\Delta(E,x) \\
&= (\Phi(E,\omega_\alpha(x),\omega_\Upsilon(x),h)-Q(E,1/\omega_\tau(x)))\cdot\omega_\tau(x)\\
&= (\omega_\alpha(x)\eta_E(E,h\omega_\Upsilon(x))h\omega_\Upsilon(x)\\
&\qquad\qquad-\varphi(E)-\delta(E)/\omega_\tau(x))\cdot\omega_\tau(x).
\end{split}
\end{align}
From \eqref{eq:delta}, we can see that $\Delta(E,x)$ is a non-decreasing function of $x\ge 0$ for all $E_\text{min}\le E\le E_\text{max}$.

The stored energy evolution in \eqref{eq:epochenevol2} is an integrating process.
If the plant is an integrating process, the proportional-integral (PI) controller can be used to control $x_i$ for maintaining $E_i$ close to $E_\text{tgt}$.
The PI controller is
\begin{align}\label{eq:pic}
x_i = C_P (E_\text{tgt}-\overline{E}_i) + C_I \mbox{$\sum_{j=1}^i$} (E_\text{tgt}-\overline{E}_j),
\end{align}
where $C_P$ and $C_I$ are the coefficients for the proportional and integral terms, respectively.
After calculating \eqref{eq:pic}, the algorithm can derive $\alpha_i = \omega_\alpha(x_i)$, $\Upsilon_i = \omega_\Upsilon(x_i)$, and $\tau_i = \omega_\tau(x_i)$.

In calculating \eqref{eq:omegax}, the algorithm should know the maximum efficiency transfer power $\widehat{\Upsilon}$.
However, it is difficult to find the exact $\widehat{\Upsilon}$ since $\widehat{\Upsilon}$ depends on nonlinear functions as well as the power attenuation.
Therefore, in our experiment, we assume that the maximum efficiency receive power $\widehat{S}$ has a constant value $S_\text{tgt}$.
The power attenuation is obtained as $h = \overline{S}_i/\Upsilon_i$ based on the receive power measurement $\overline{S}_i$.
Then, the maximum efficiency energy transfer power can be calculated as $\widehat{\Upsilon}=S_\text{tgt}/h$.

\section{Experimental Results}\label{section:result}

In this section, we present experimental results on the WPSN testbed to show the performance of the proposed energy management scheme.
We have implemented the energy management scheme on the WPSN testbed as proposed in Section \ref{section:energy management}.
The length of a frame is $T_\text{frame} = 100$ ms, the maximum energy transfer power is $\Upsilon_\text{max} = 2.3$ W, and the maximum sensor node voltage is 3.0 V.
In this section, the power attenuation on the RF energy transfer channel will be given in a dB scale (i.e., $-10\log_{10} h$).

In Figs.~\ref{fig:ampdutycycling} and \ref{fig:ampharvcons}, we show the efficiency of the RF energy transfer when the amplifier duty cycling is applied.
Fig.~\ref{fig:ampduty} shows the average amplifier power consumption (i.e., $\Omega(\alpha,\Upsilon)=\alpha\cdot(\Upsilon/\theta(\Upsilon))$) on y-axis and the average transmit power (i.e., $\alpha\Upsilon$) on x-axis, according to the amplifier duty cycle $\alpha$ and the energy transfer power $\Upsilon$.
In this figure, we can see that lower average amplifier power consumption is achieved for a given average transmit power when the amplifier duty cycling is used with fixed $\Upsilon$, which is because of the characteristics of the PAE.
In Fig.~\ref{fig:dutyharv}, we show the average harvested power (i.e., $\alpha\cdot\eta_E(E,h\Upsilon)\cdot h\Upsilon$) on y-axis and the average receive power (i.e., $\alpha\cdot h\Upsilon$) on x-axis, according to $\alpha$ and the receive power (i.e., $h\Upsilon$).
Since the wireless energy harvesting efficiency is maximized around 10 mW from Fig.~\ref{fig:harveff}, we have higher average harvested power when the receive power is fixed to 10 mW.
Fig.~\ref{fig:ampharvcons} shows the average amplifier power consumption for achieving a given average harvested power, which is affected by both the PAE and the wireless energy harvesting efficiency, when the power attenuations are 19.73 dB and 22.65 dB.
This figure clearly shows the advantage of the amplifier duty cycling.

\begin{figure}
    \centering
    \subfigure[Amplifier power consumption]{
        \label{fig:ampduty}\includegraphics[width=4.1cm, bb=0.7in 0.3in 9.5in 7.4in] {./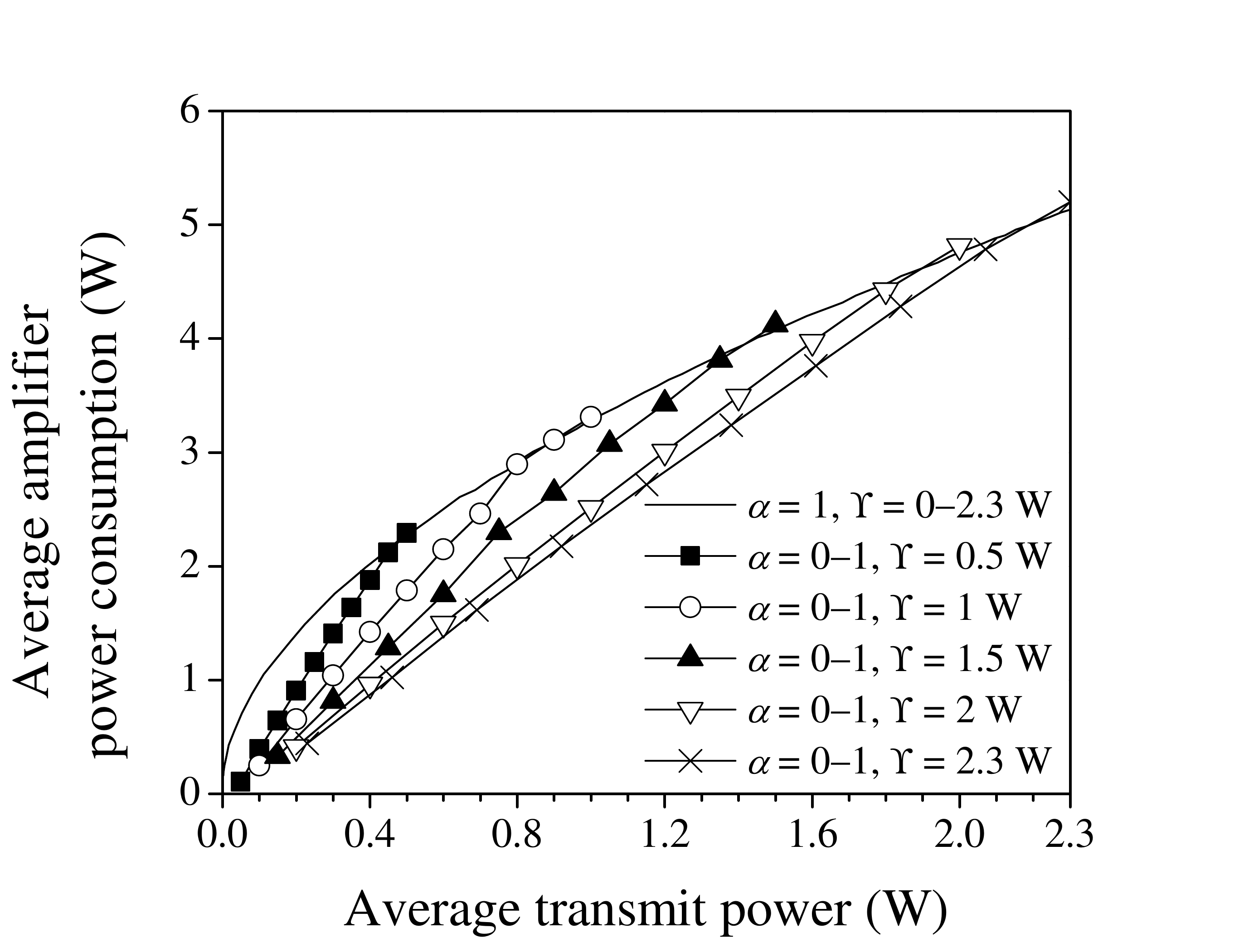}
        }~
    \subfigure[Harvested power]{
        \label{fig:dutyharv}\includegraphics[width=4.1cm, bb=0.7in 0.3in 9.5in 7.4in] {./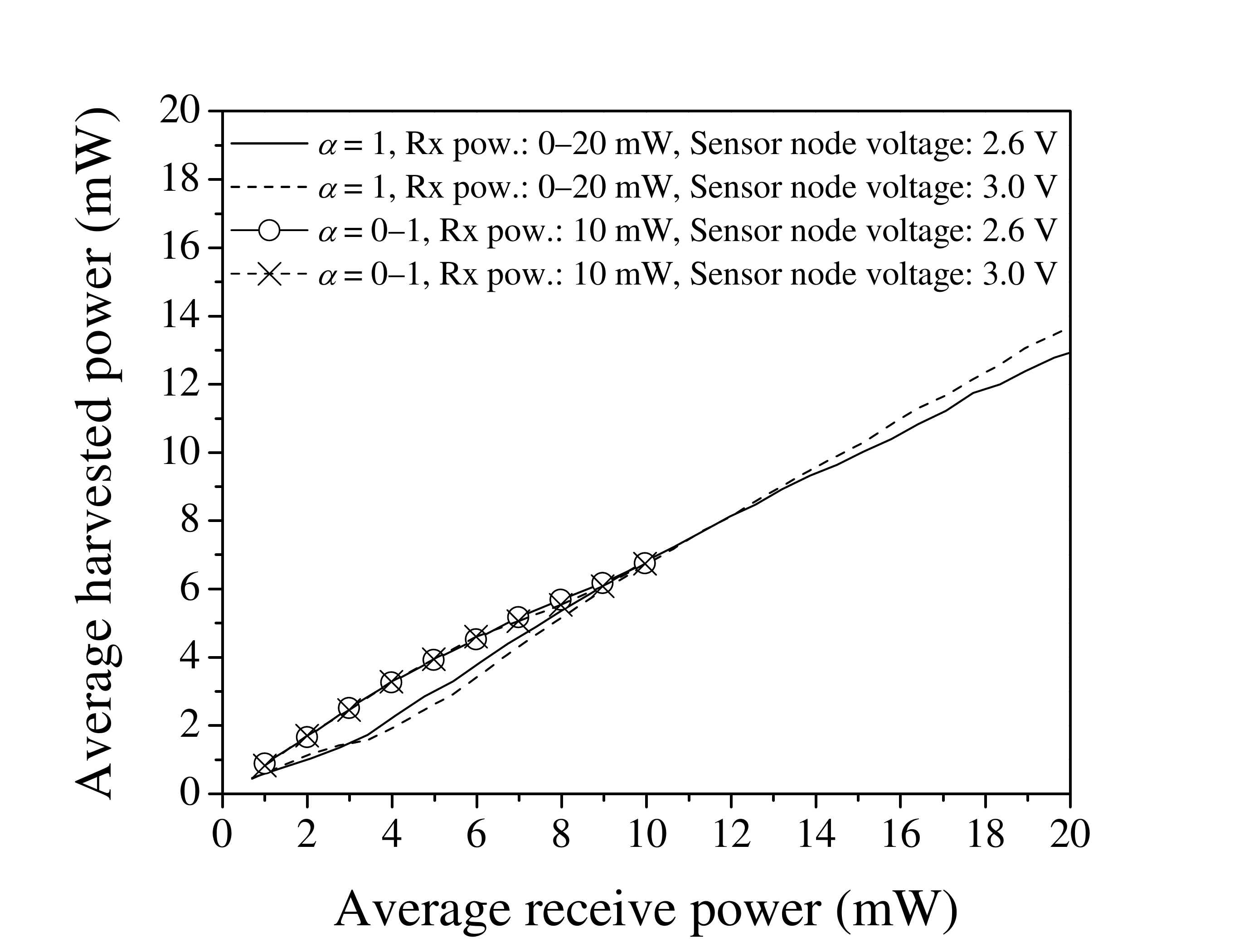}
        }
    \caption{Average amplifier power consumption and average harvested power when amplifier duty cycling is used.}
    \label{fig:ampdutycycling}
\end{figure}

\begin{figure}
	\centering
    \includegraphics[width=4.5cm, bb=0.8in 0.3in 9.6in 7.2in]{./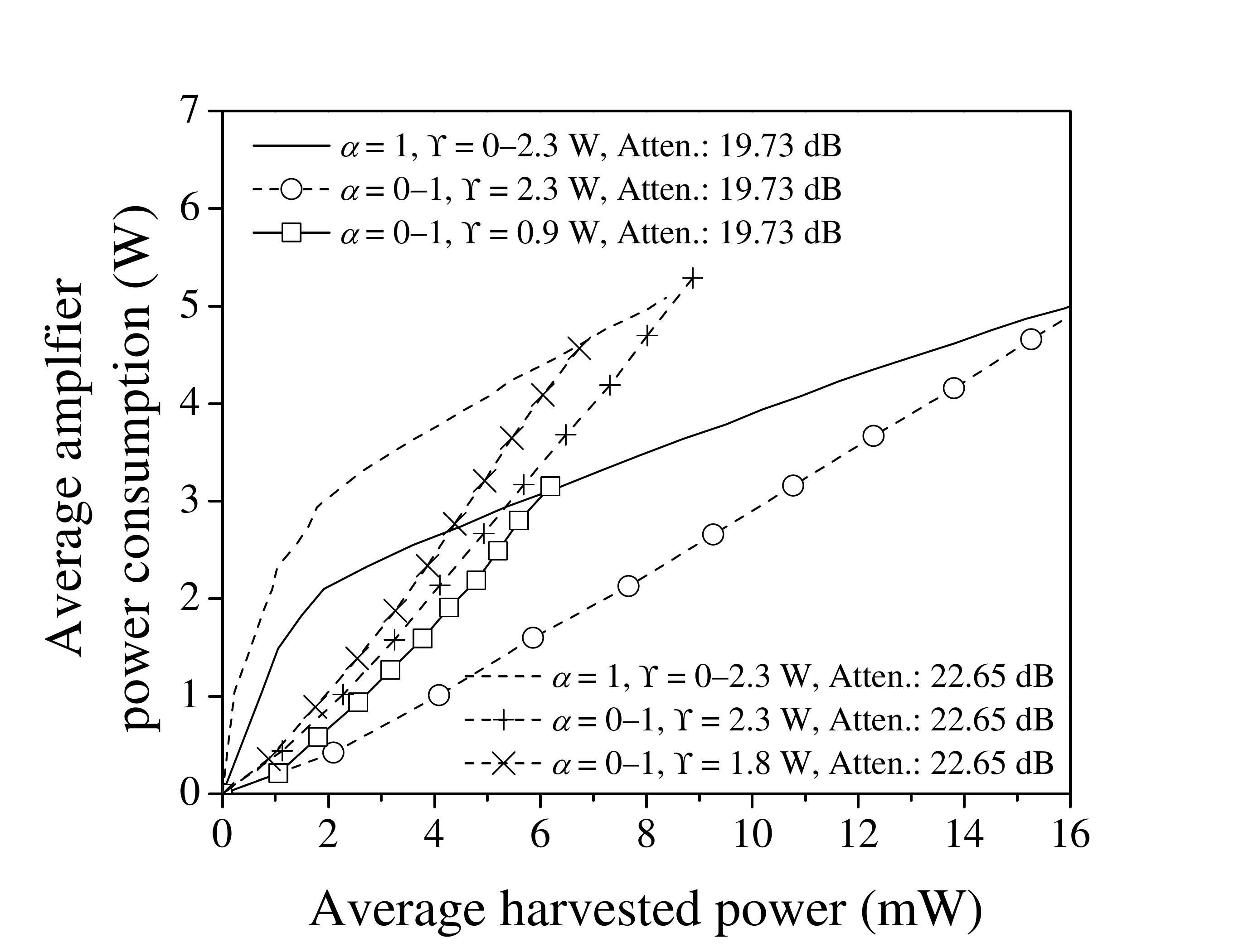}
    \caption{Average amplifier power consumption according to average harvested power when amplifier duty cycling is used.}
    \label{fig:ampharvcons}
\end{figure}

Fig.~\ref{fig:senscons} shows the sensor module power consumption over time when the wake-up interval is set to $\tau = 1$ and the sensor node voltage is 3.0 V.
In Fig.~\ref{fig:sensconslong}, we can see the sensor module wakes up every 100 ms similarly to the timing diagram in Fig.~\ref{fig:energymanagement}.
From Fig.~\ref{fig:sensconsshort}, we can derive the time duration of each mode such that $T_\text{rx} = 2.34$ ms, $T_\text{act} = 5.01$ ms, and $T_\text{tx} = 1.81$ ms.
In Fig.~\ref{fig:intpwr}, we show the average sensor module power consumption according to the wake-up interval.
The average sensor module power consumption is given by
$(\sum_{m\in \{\text{rx},\text{act},\text{tx},\text{idle}\}} \xi_m(E)\cdot T_m
 + \xi_\text{idle}(E) \cdot (\tau-1) T_\text{frame})/(\tau T_\text{frame})$.
In Fig.~\ref{fig:intpwr}, `expected' refers to the average sensor module power consumption calculated by this equation based on $T_\text{rx}$, $T_\text{act}$, $T_\text{tx}$, and Table \ref{tab:chipmode}.
In addition, we have measured the actual sensor node power consumption by measuring current going through the sensor module.
We can see that the expected sensor module power consumption well matches with the measured one.

\begin{figure}
    \centering
    \subfigure[Long time period]{
        \label{fig:sensconslong}\includegraphics[width=4.1cm, bb=0.7in 0.3in 9.5in 7.4in] {./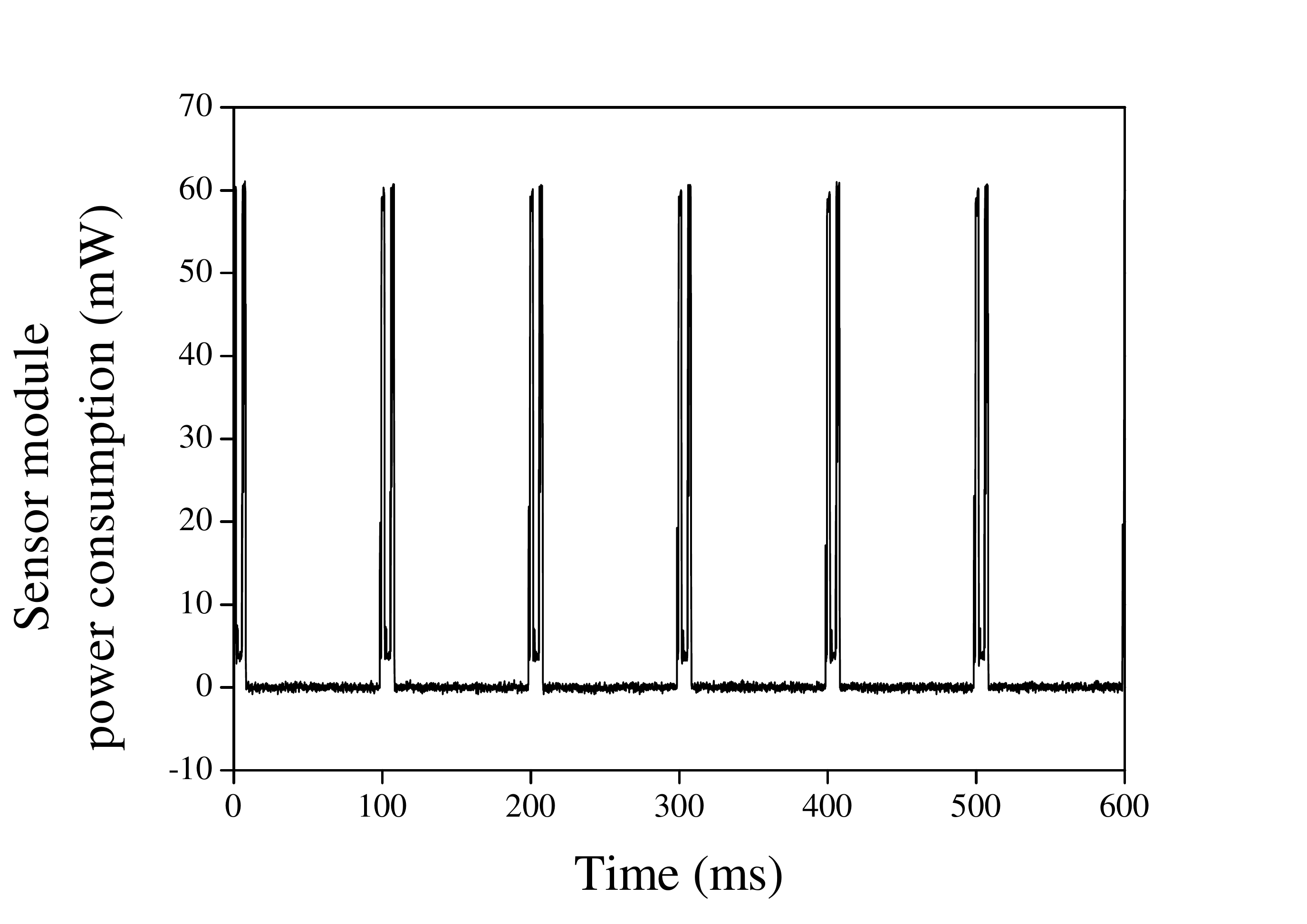}
        }~
    \subfigure[Short time period]{
        \label{fig:sensconsshort}\includegraphics[width=4.1cm, bb=0.7in 0.3in 9.5in 7.4in] {./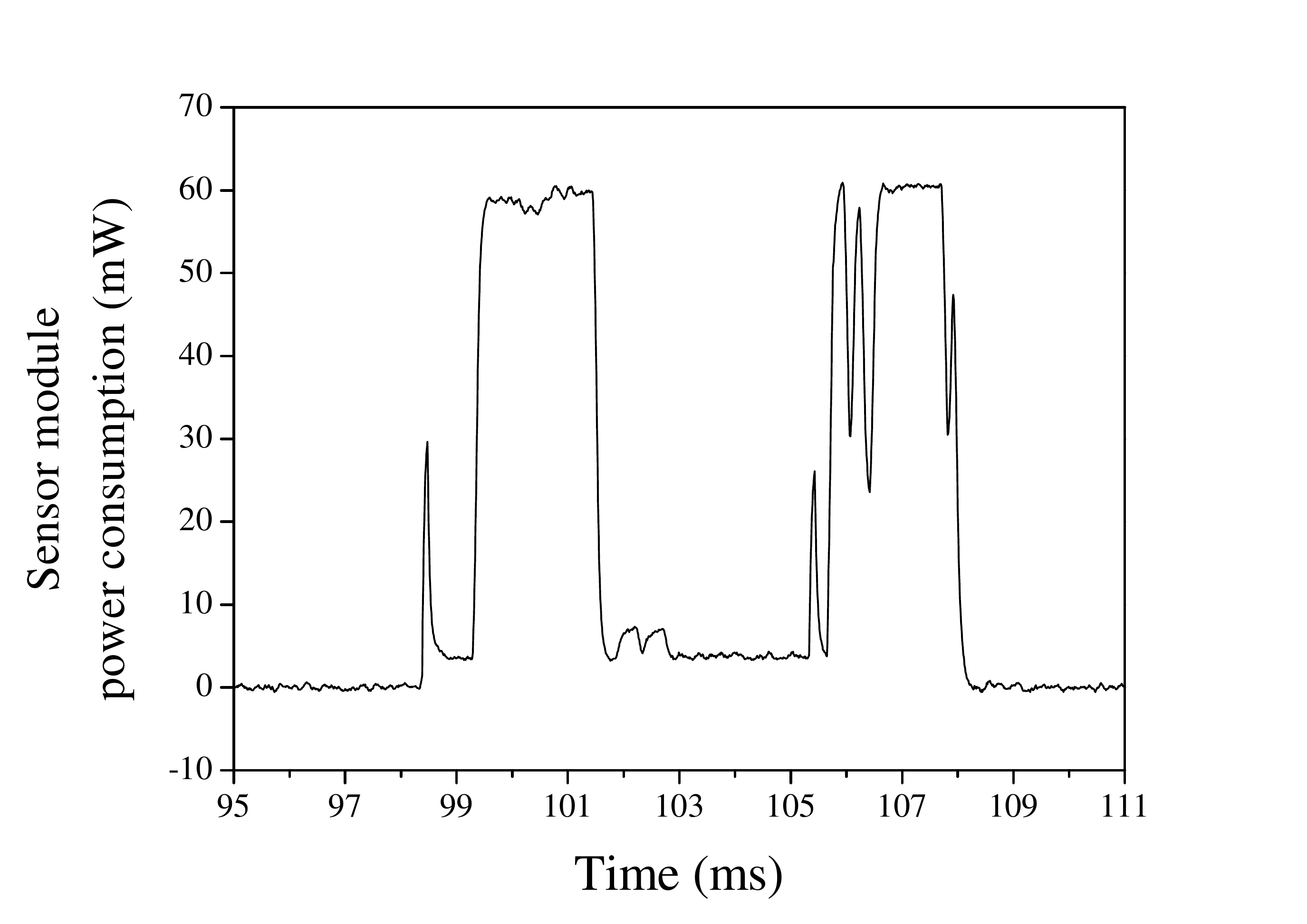}
        }
    \caption{Sensor module power consumption over time.}
    \label{fig:senscons}
\end{figure}

\begin{figure}
	\centering
    \includegraphics[width=4.5cm, bb=0.8in 0.3in 9.6in 7.2in]{./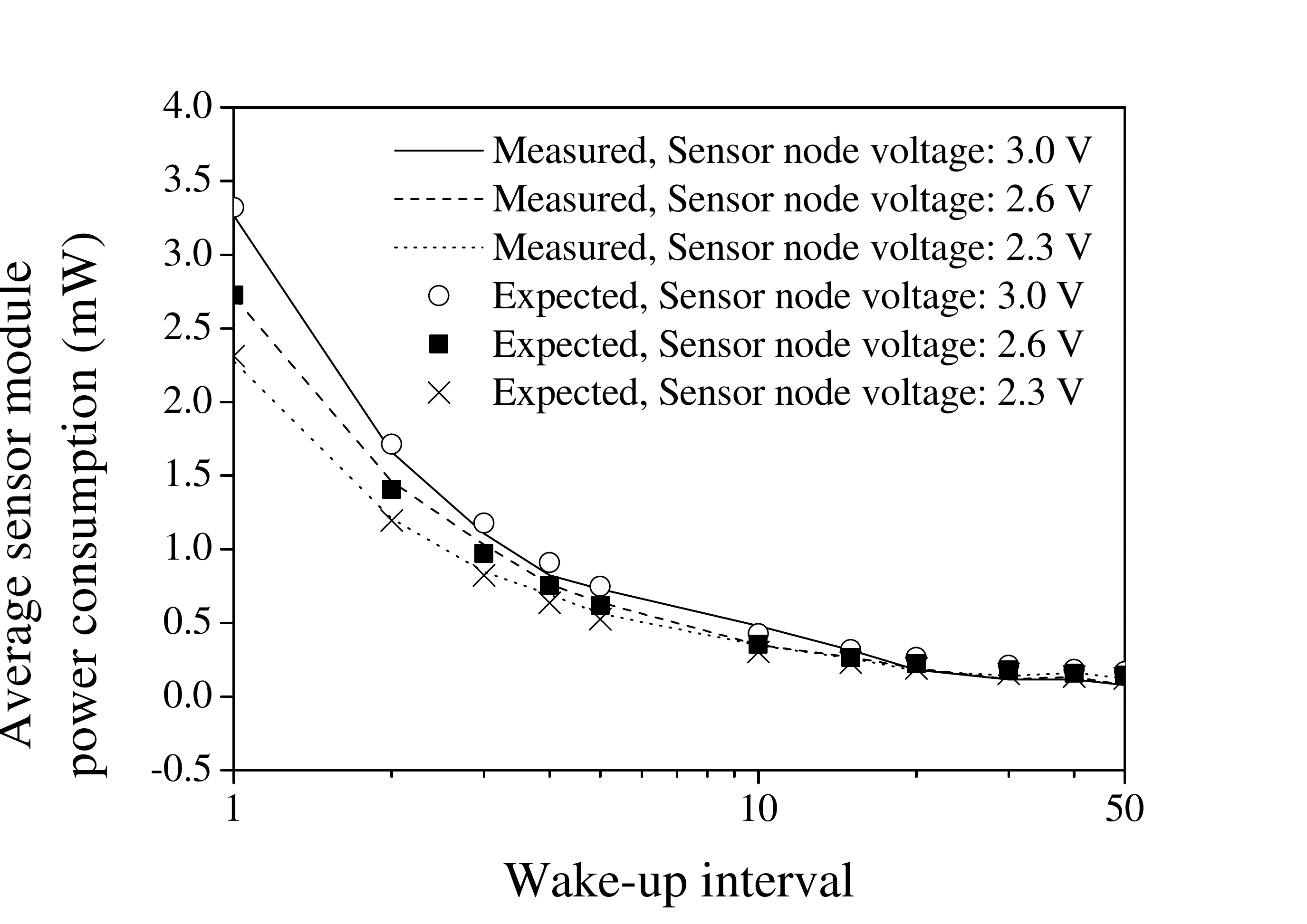}
    \caption{Average sensor module power consumption according to wake-up interval.}
    \label{fig:intpwr}
\end{figure}

Fig.~\ref{fig:senschar} shows the energy storage charging power over time.
The energy storage charging power is defined as the power charged to the energy storage module, which is the multiplication of the sensor node voltage and the current through the energy storage module, i.e., $V I_\text{ES}$.
For this figure, we set $\alpha = 0.4$, $\Upsilon = 2.3$ W, $\tau = 1$, the power attenuation to 15 dB, and the sensor node voltage to 3.0 V.
In Fig.~\ref{fig:senschar}, we can see that the power is discharged around the start of each frame due to the sensor module power consumption and the power is charged after the sensor module is put into an idle mode.
In Fig.~\ref{fig:harvcons}, we show the energy storage charging power according to the energy transfer power and the wake-up interval when $\alpha = 1$, the sensor node voltage is 3.0 V, and the power attenuations are 26.89 dB and 30.62 dB.
Considering that the supercapacitor leakage power is small, the energy storage can be stable or charged as long as the energy storage charging power is non-negative.
In Fig.~\ref{fig:harvcons}, we can see that high energy transfer power or high wake-up interval is required for positive energy storage charging power.

\begin{figure}
    \centering
    \subfigure[Long time period]{
        \label{fig:senscharlong}\includegraphics[width=4.1cm, bb=0.7in 0.3in 9.5in 7.4in] {./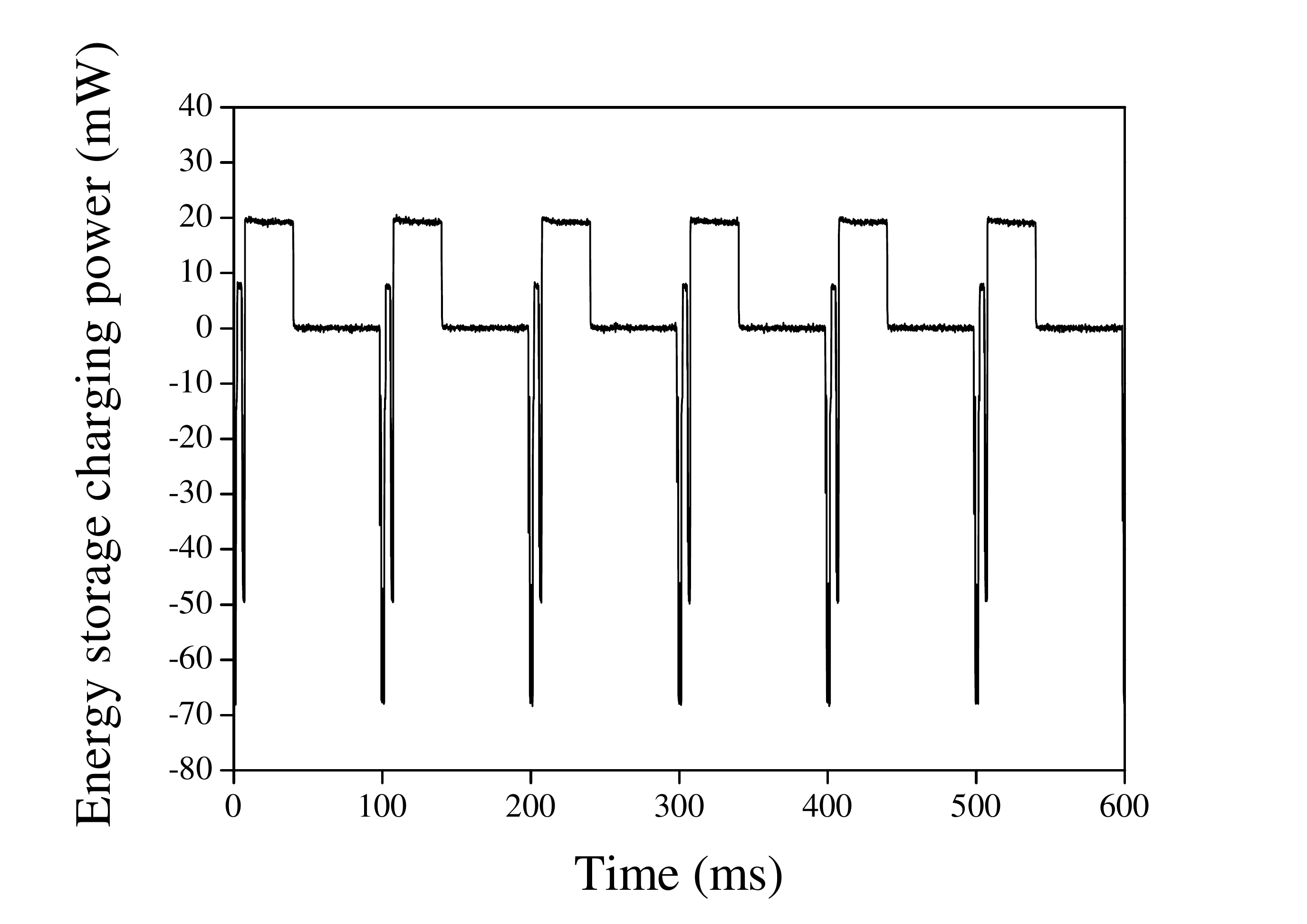}
        }~
    \subfigure[Short time period]{
        \label{fig:senscharshort}\includegraphics[width=4.1cm, bb=0.7in 0.3in 9.5in 7.4in] {./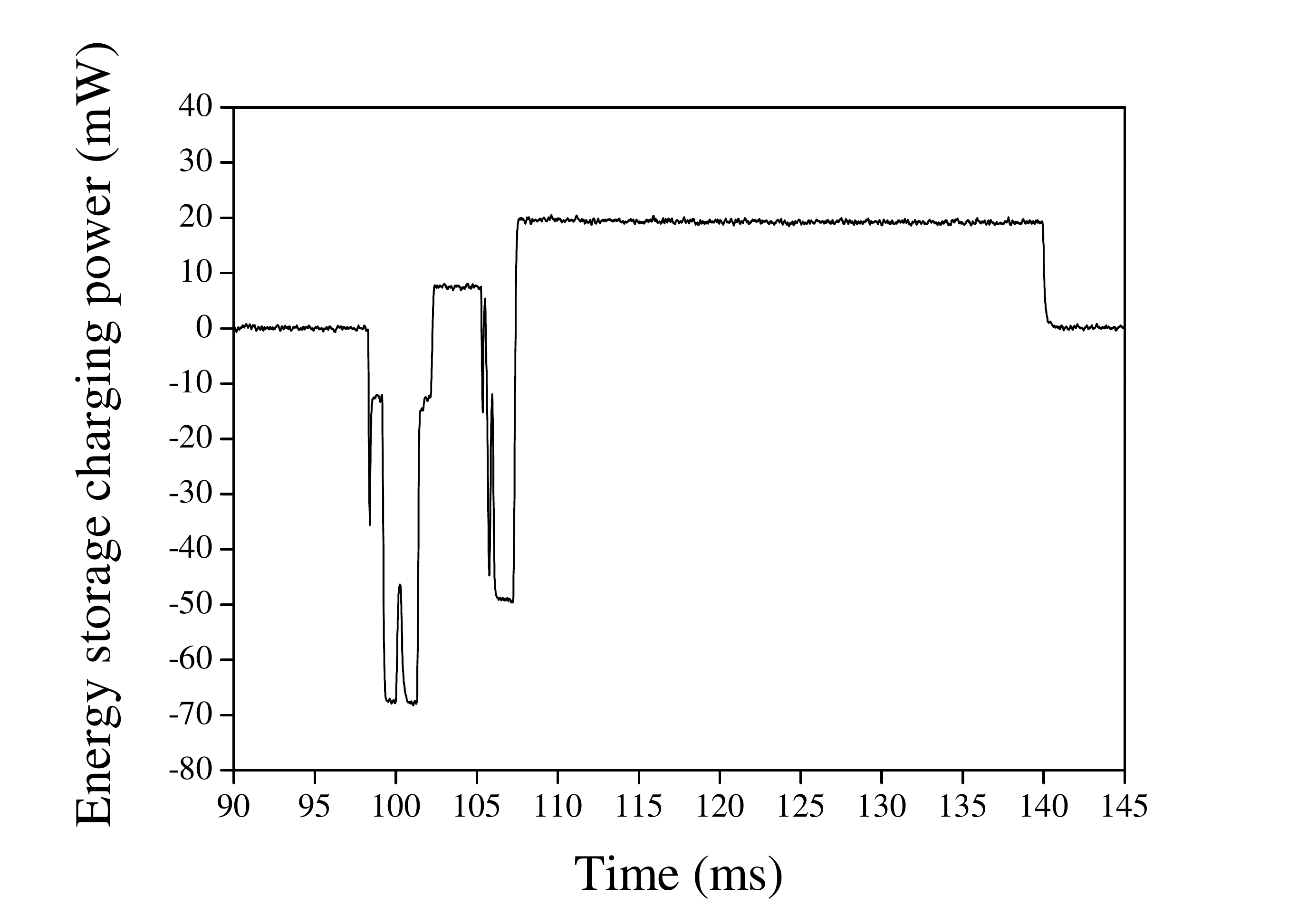}
        }
    \caption{Energy storage charging power over time.}
    \label{fig:senschar}
\end{figure}

\begin{figure}
    \centering
    \subfigure[Attenuation: 26.89 dB]{
        \label{fig:harvcons1}\includegraphics[width=4.1cm, bb=0.7in 0.3in 10in 7.4in] {./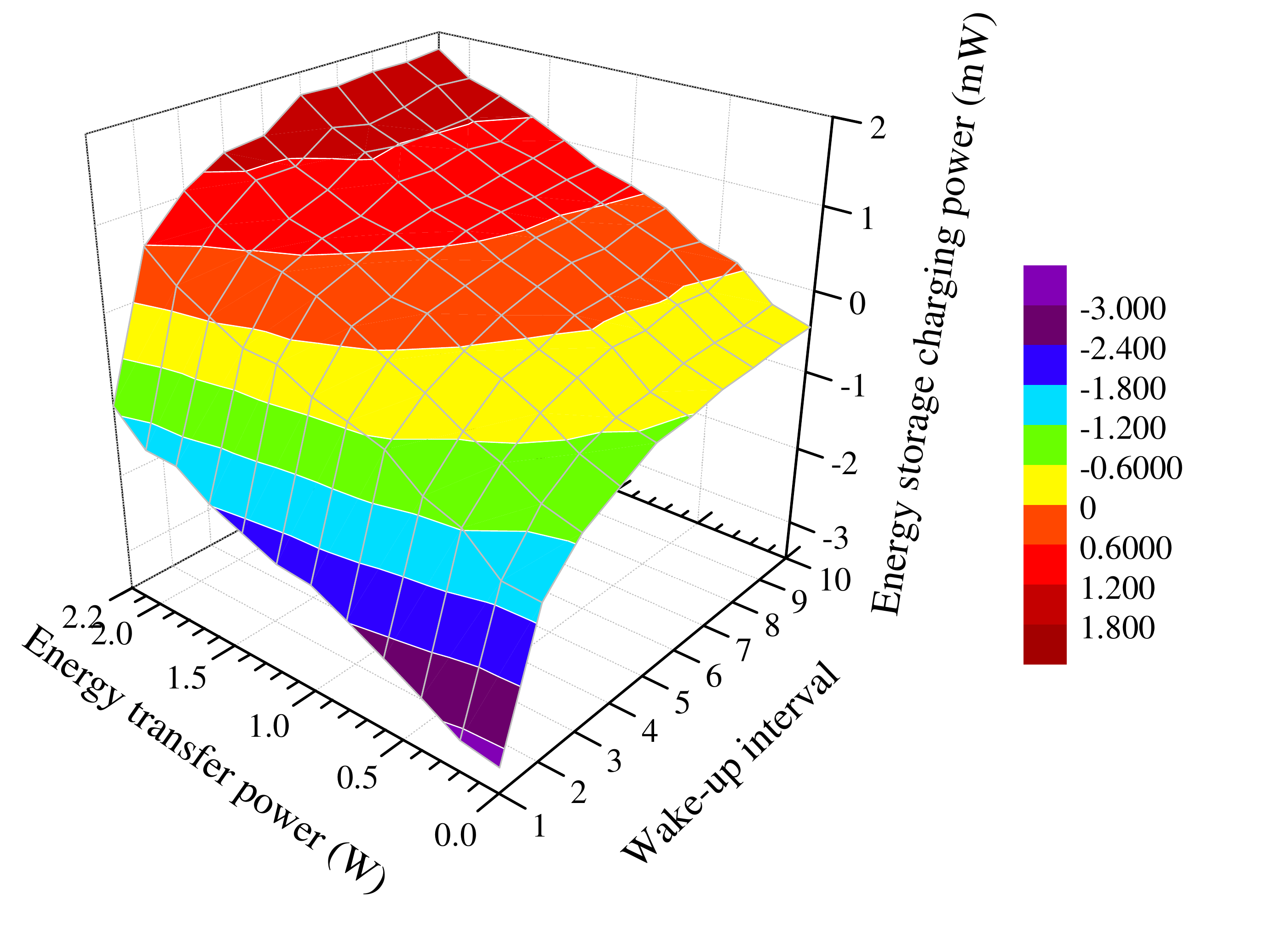}
        }~
    \subfigure[Attenuation: 30.62 dB]{
        \label{fig:harvcons2}\includegraphics[width=4.1cm, bb=0.7in 0.3in 10in 7.4in] {./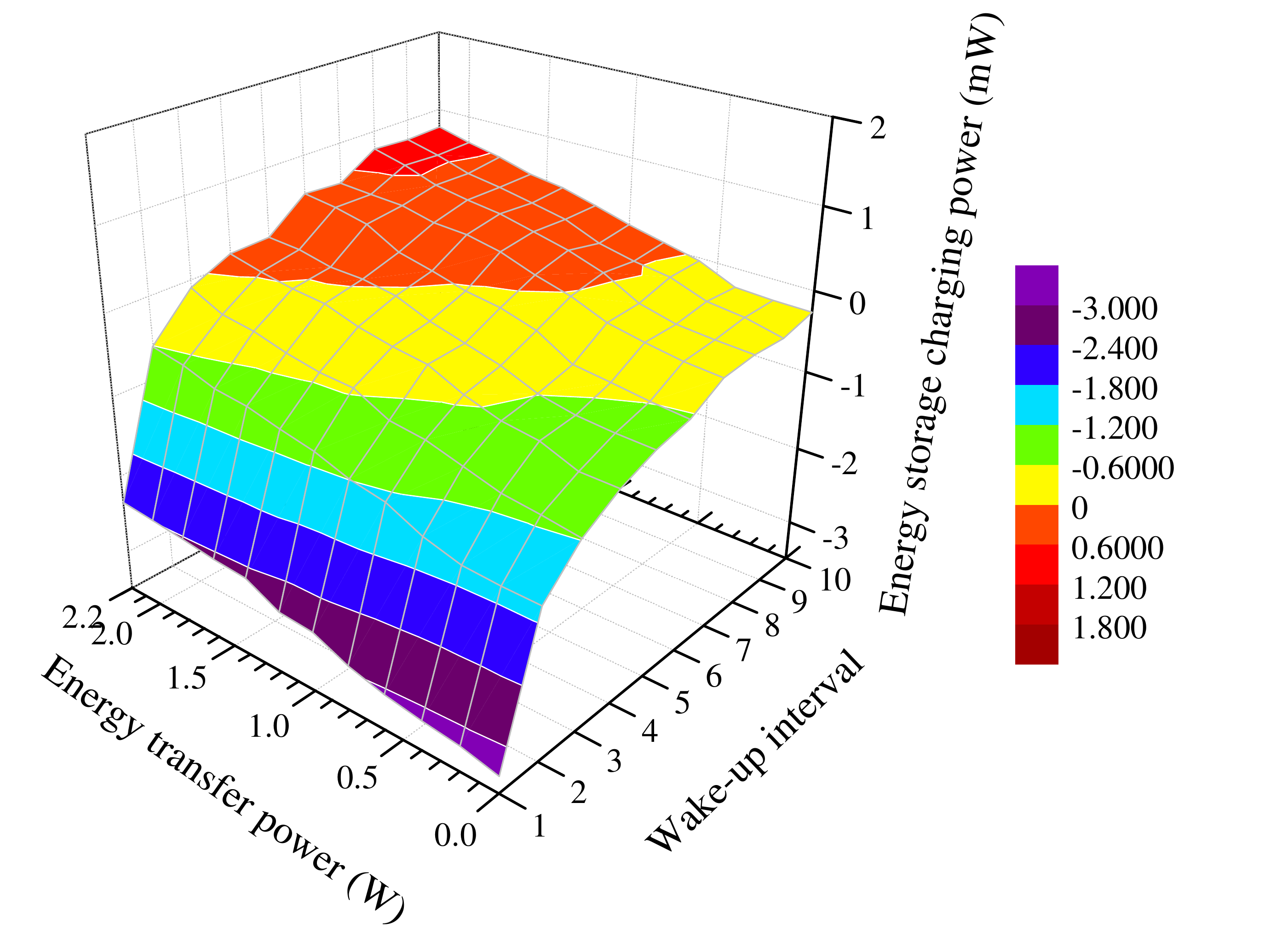}
        }
    \caption{Energy storage charging power according to energy transfer power and wake-up interval.}
    \label{fig:harvcons}
\end{figure}

In Fig.~\ref{fig:harvconsexmodelval}, we validate the discrete-time stored energy evolution model in Section \ref{section:evolutionmodel} by comparing the expected and the measured results.
By slightly modifying \eqref{eq:delta}, the expected energy storage charging power is calculated as $\Phi(E,\alpha,\Upsilon,h)/T_\text{frame} - (\sum_{m\in \{\text{rx},\text{act},\text{tx},\text{idle}\}} \xi_m(E)\cdot T_m
 + \xi_\text{idle}(E) \cdot (\tau-1) T_\text{frame})/(\tau T_\text{frame})$.
In Fig.~\ref{fig:harvconsex}, we can see that the expected energy storage charging power is very similar to the measured one.
Fig.~\ref{fig:modelval} compares the expected and the measured stored energy variation over time.
The expected stored energy variation is calculated as $(\Phi(E,\alpha,\Upsilon,h)-Q(E,1/\tau))/T_\text{frame}$.
In calculating the expected stored energy variation, the stored energy $E$ is the only measured parameter and all other parameters and functions are given by the stored energy evolution model.
To calculate the measured stored energy variation, we have measured the stored energy every second and have taken the difference of two consecutively obtained stored energy measurements.
For Fig.~\ref{fig:modelval}, the power attenuation is set to 26.89 dB and we change the parameters $(\alpha,\Upsilon\text{ (in Watt)},\tau)$ every ten seconds in the following sequence: (0,2,1), (1,2,1), (1,2,2), (1,1,2), (0.5,1,2), (0.2,1,2), (0.2,1,5), (0.2,1,10), (1,1,10).
In this figure, we can see that the expected stored energy variation well approximates the measured one, which proves the validity of our stored energy evolution model.

\begin{figure}
    \centering
    \subfigure[Energy storage charging power]{
        \label{fig:harvconsex}\includegraphics[width=4.1cm, bb=0.7in 0.3in 9.5in 7.4in] {./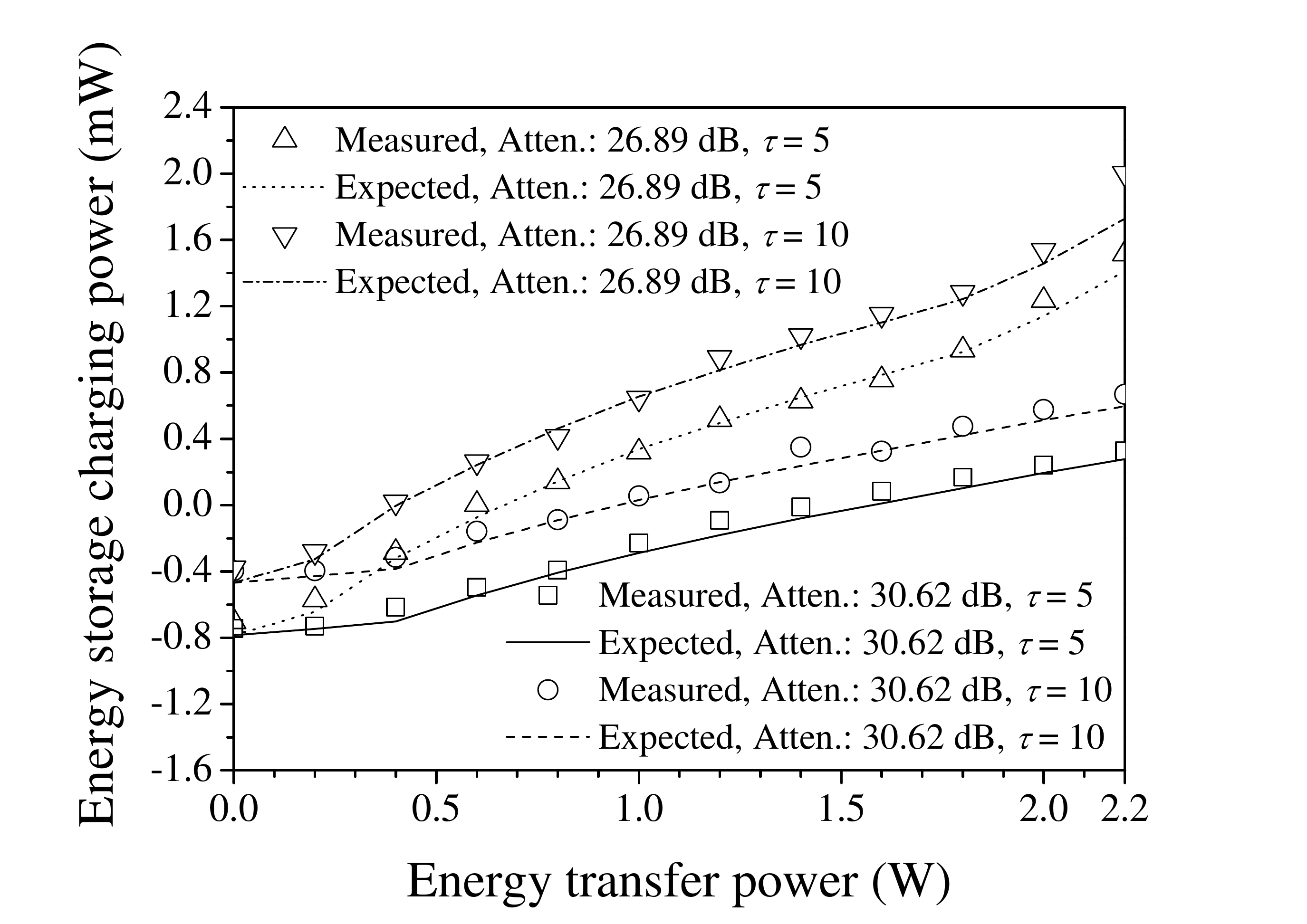}
        }~
    \subfigure[Stored energy variation]{
        \label{fig:modelval}\includegraphics[width=4.1cm, bb=0.7in 0.3in 9.5in 7.4in] {./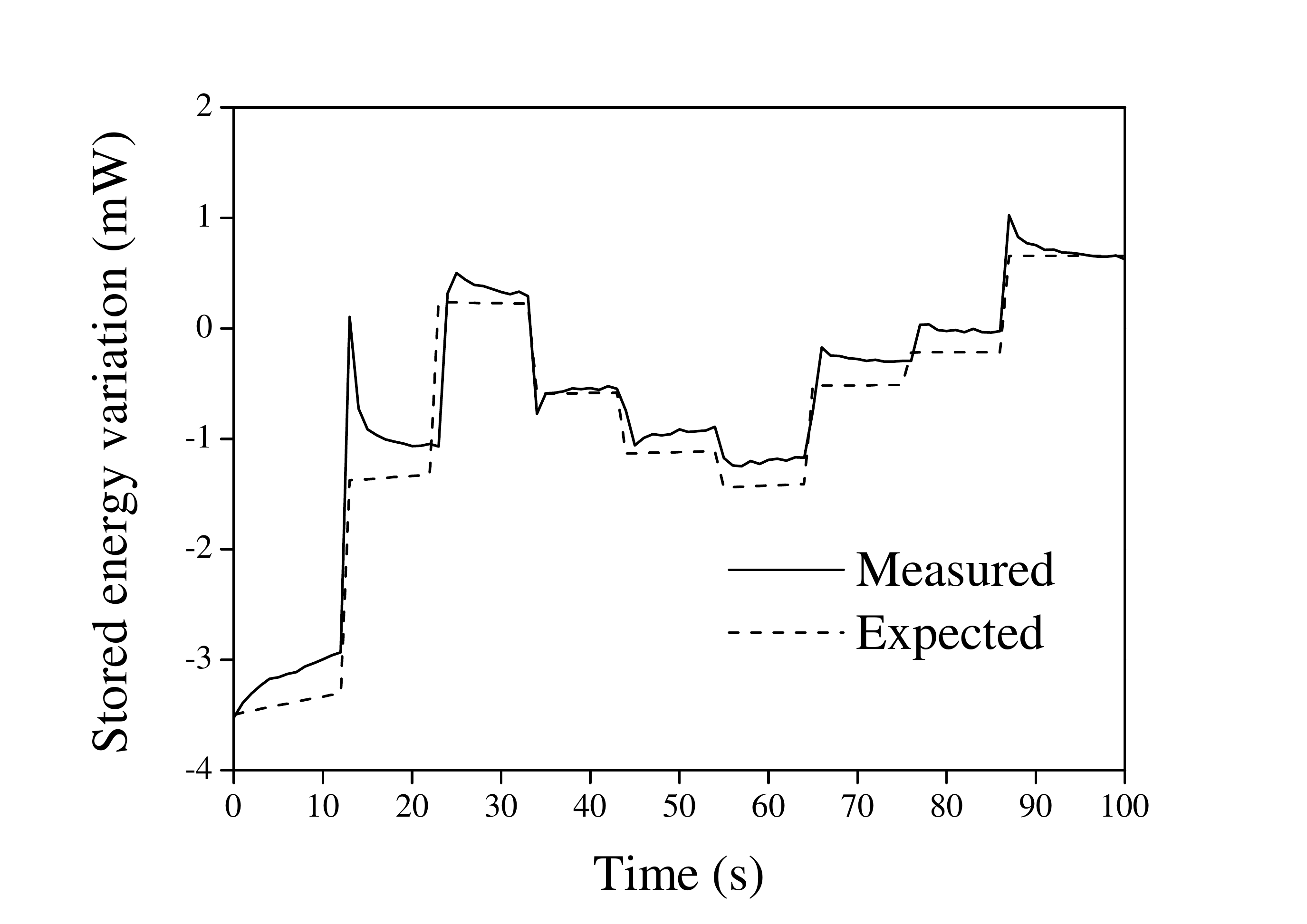}
        }
    \caption{Measured and expected energy storage charging power and stored energy variation.}
    \label{fig:harvconsexmodelval}
\end{figure}

In Figs.~\ref{fig:conttime} and \ref{fig:conttimeant}, we show the operation of the adaptive energy management algorithm in Section \ref{section:emalg}.
In these figures, we plot the amplifier duty cycle, the energy transfer power, the wake-up interval, the stored energy, and the amplifier power consumption over time when the adaptive energy management algorithm is used.
The algorithm parameters are $\alpha_\text{min}=0.1$, $\Upsilon_\text{max}=2300$ mW, $\tau_\text{tgt}=1$, $E_\text{tgt}=380$ mJ, $C_P=0$, $C_I=0.01$, $\beta_{\Upsilon}=100$, and $\beta_{\tau}=1$.
We use the step attenuator-based channel environment for Fig.~\ref{fig:conttime} and the antenna-based channel environment for Fig.~\ref{fig:conttimeant}.
We set $S_\text{tgt}=20$ mW for Fig.~\ref{fig:conttime} and $S_\text{tgt}=10$ mW for Fig.~\ref{fig:conttimeant}.
For Fig.~\ref{fig:conttime}, we set the power attenuation (in dB) every 100 seconds in the following sequence: 16.69, 20.66, 24.63, 28.62, 32.82, 16.69, 32.82, 24.63, 20.66, 16.69.
For Fig.~\ref{fig:conttimeant}, the distance (in meter) is changed every 100 seconds in the following sequence: 1.0, 1.5, 2.0, 2.5, 3.0, 3.5, 2.5, 1.0, 3.0, 2.0.
In both figures, we can see that the energy management algorithm controls one of $\alpha$, $\Upsilon$, and $\tau$ at a time in order to keep the stored energy $E_i$ to $E_\text{tgt}$.
As a result, we can see that the stored energy is maintained to be around 380 mJ.

\begin{figure}
    \centering
    \subfigure[Amplifier duty cycle, energy transfer power, and wake-up interval]{
        \label{fig:conttime1}\includegraphics[width=4.1cm, bb=1.7in 0.3in 10.1in 7.4in] {./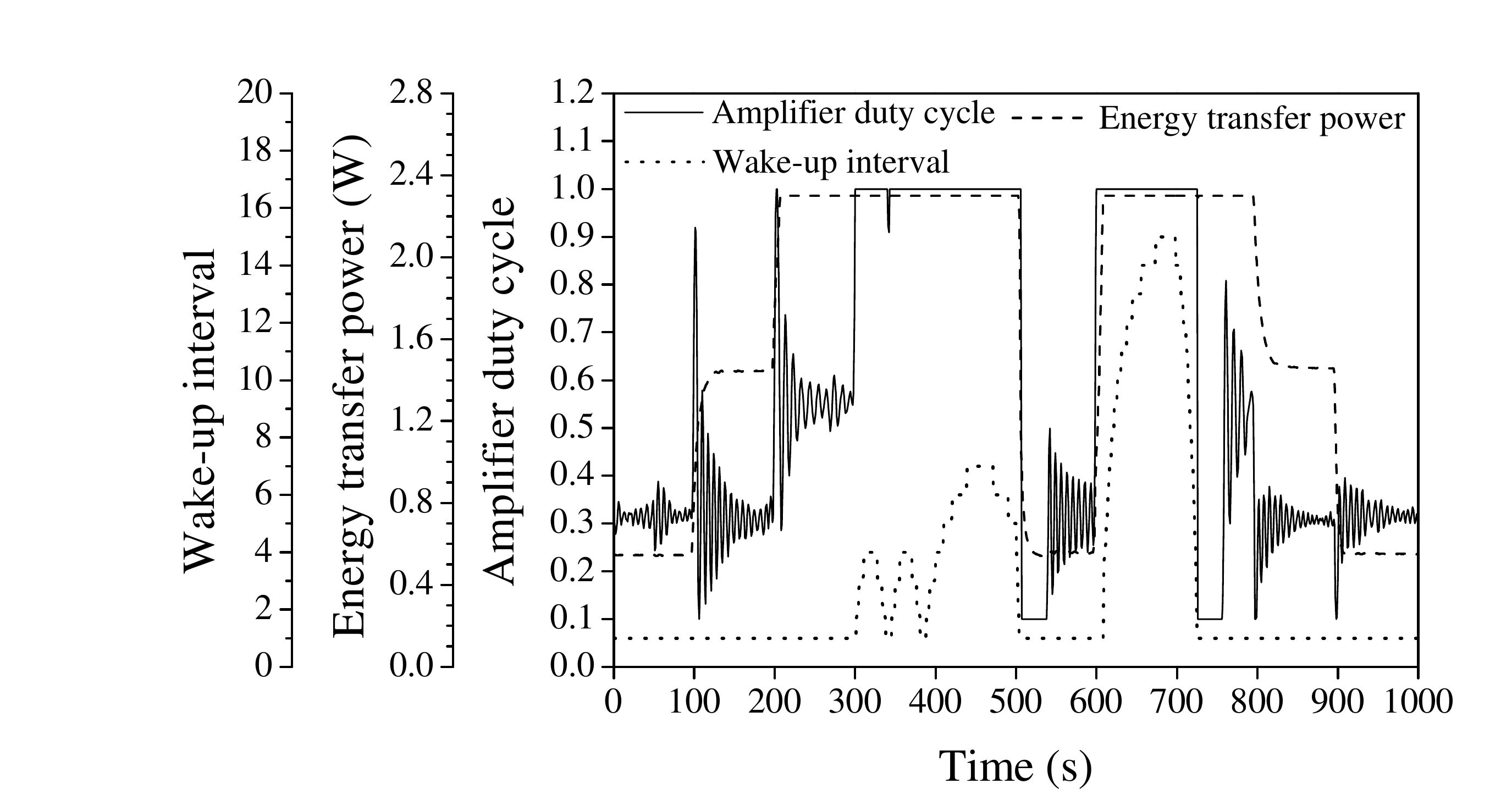}
        }~
    \subfigure[Stored energy and amplifier power consumption]{
        \label{fig:conttime2}\includegraphics[width=4.1cm, bb=2.1in 0.3in 10.5in 7.4in] {./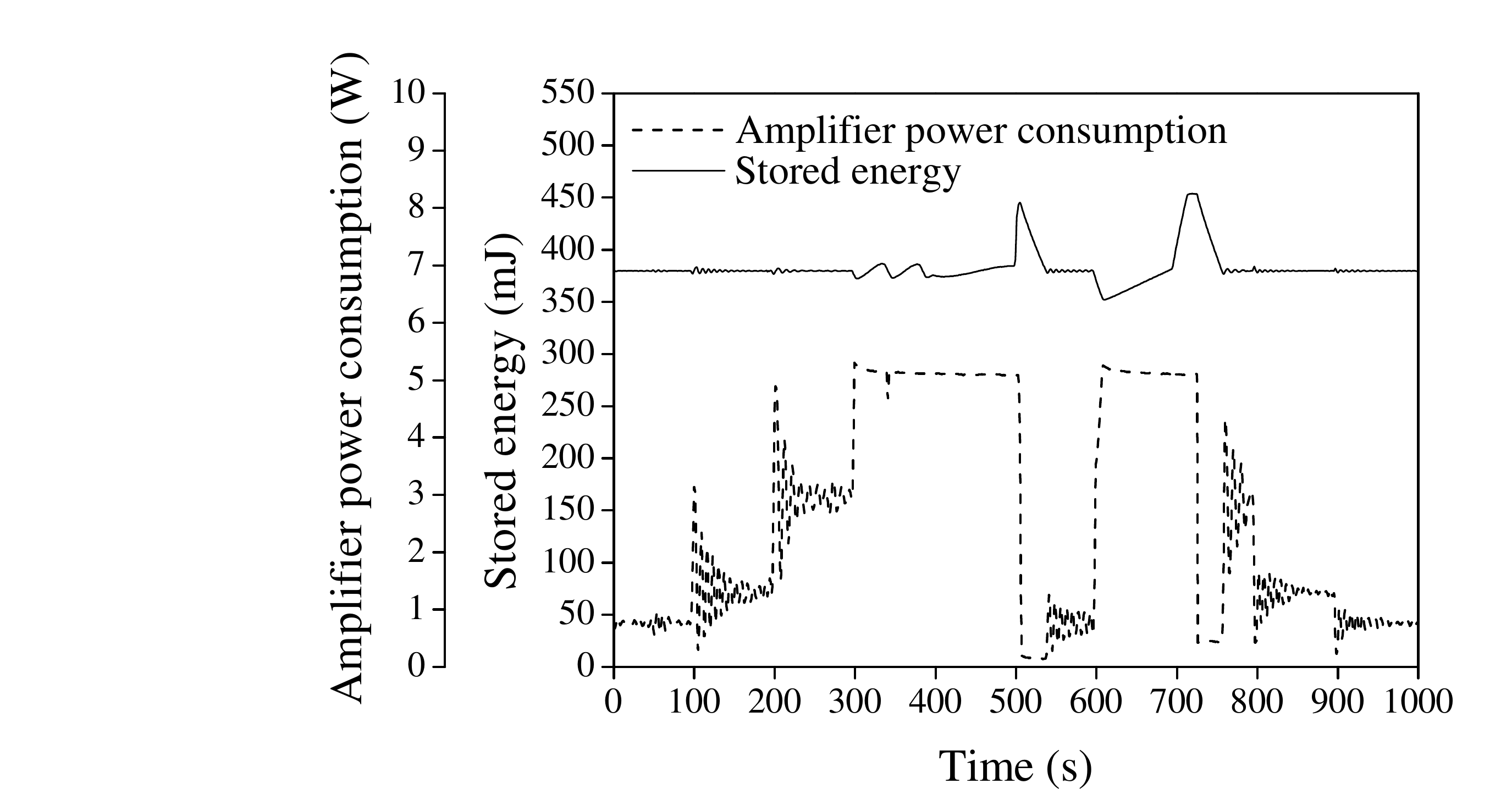}
        }
    \caption{Operation of proposed energy management algorithm over time in attenuator-based channel environment.}
    \label{fig:conttime}
\end{figure}

\begin{figure}
    \centering
    \subfigure[Amplifier duty cycle, energy transfer power, and wake-up interval]{
        \label{fig:conttime1ant}\includegraphics[width=4.1cm, bb=1.7in 0.3in 10.1in 7.4in] {./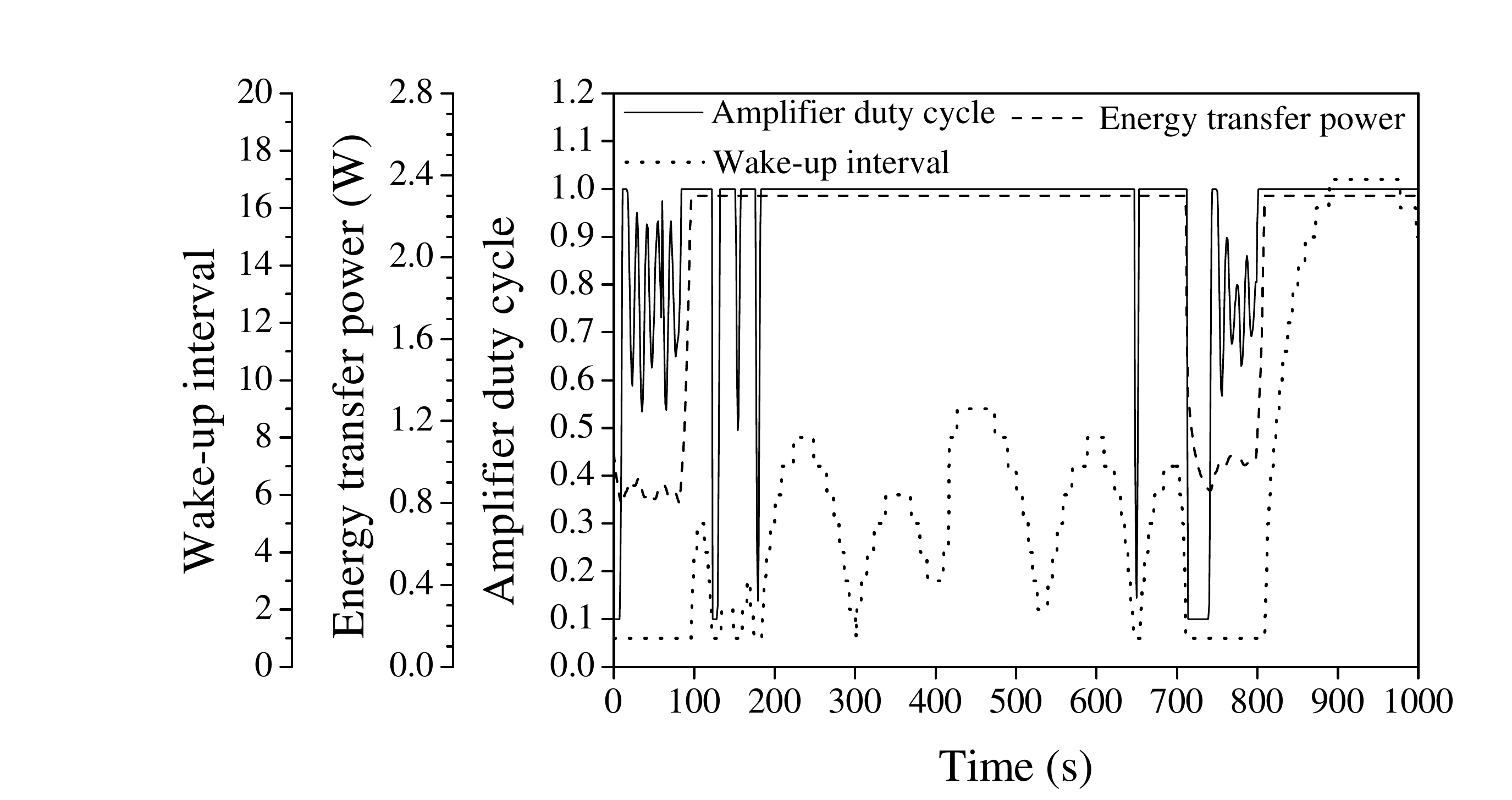}
        }~
    \subfigure[Stored energy and amplifier power consumption]{
        \label{fig:conttime2ant}\includegraphics[width=4.1cm, bb=2.1in 0.3in 10.5in 7.4in] {./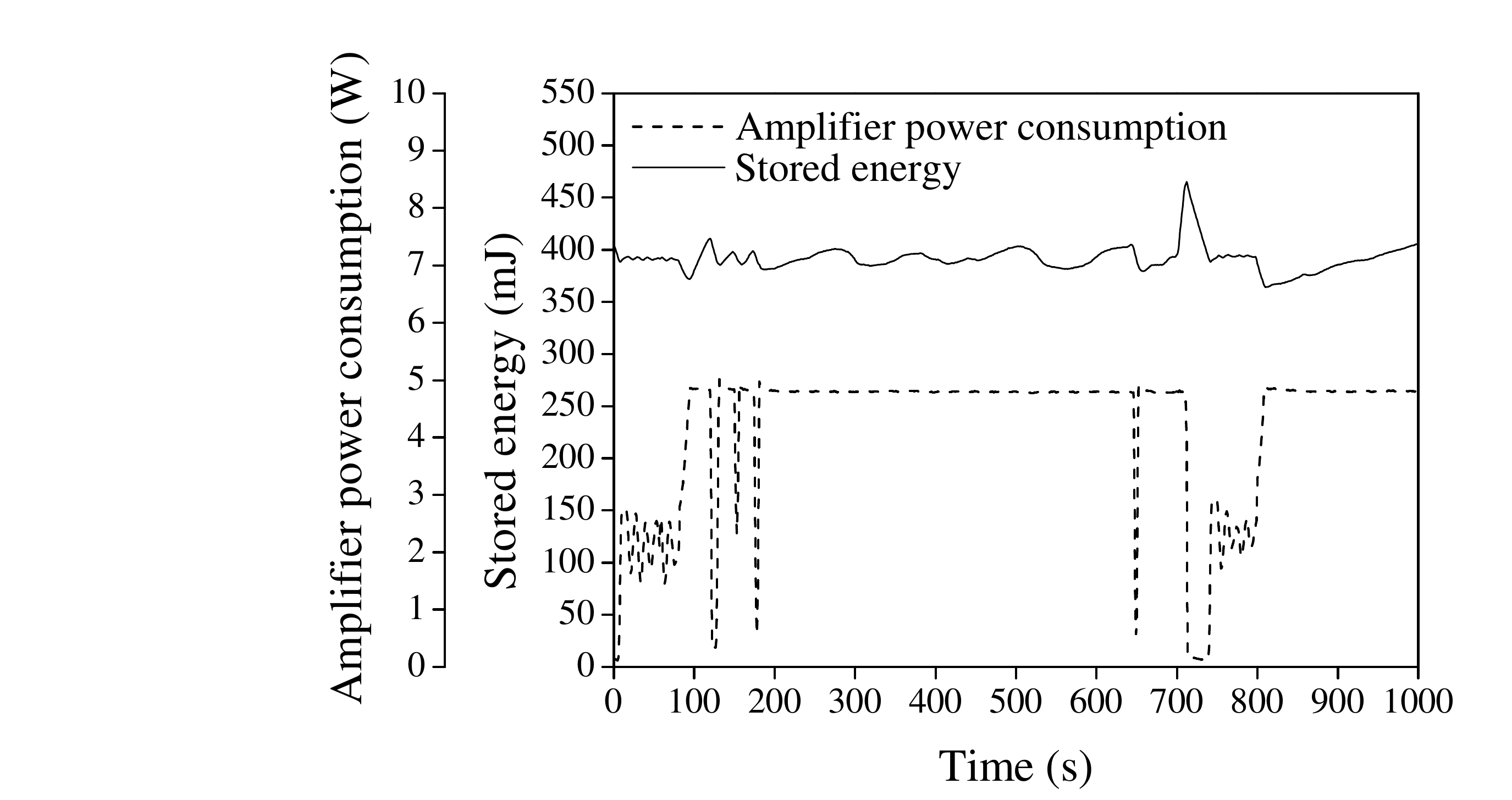}
        }
    \caption{Operation of proposed energy management algorithm over time in antenna-based channel environment.}
    \label{fig:conttimeant}
\end{figure}

Figs.~\ref{fig:energymanagementresult} and \ref{fig:contamppow} compare the performance of the proposed energy management scheme and the no amplifier duty cycling scheme.
The no amplifier duty cycling scheme fixes the amplifier duty cycle to one (i.e., $\alpha = 1$) while it controls $\Upsilon$ and $\tau$ in the same way as the proposed algorithm does.
Fig.~\ref{fig:energymanagementresult} plots $\alpha$, $\Upsilon$, and $\tau$ after convergence according to the power attenuation.
For Fig.~\ref{fig:contprop20}, we set $S_\text{tgt}=20$ mW.
In Fig.~\ref{fig:contprop20}, we can see that $\Upsilon$ is first controlled and then $\alpha$ and $\tau$ are controlled as the power attenuation becomes severe.
Fig.~\ref{fig:contamppow} shows the amplifier power consumption of the proposed energy management scheme and the no amplifier duty cycling scheme.
In this figure, we can see that the proposed scheme outperforms the no amplifier duty cycling scheme.
When $S_\text{tgt}=20$ mW and the power attenuation is less than 23 dB, the proposed scheme only consumes less than half of the power consumed by the no amplifier duty cycling scheme.

\begin{figure}
    \centering
    \subfigure[Proposed energy management algorithm]{
        \label{fig:contprop20}\includegraphics[width=4.1cm, bb=1.7in 0.3in 10.1in 7.4in] {./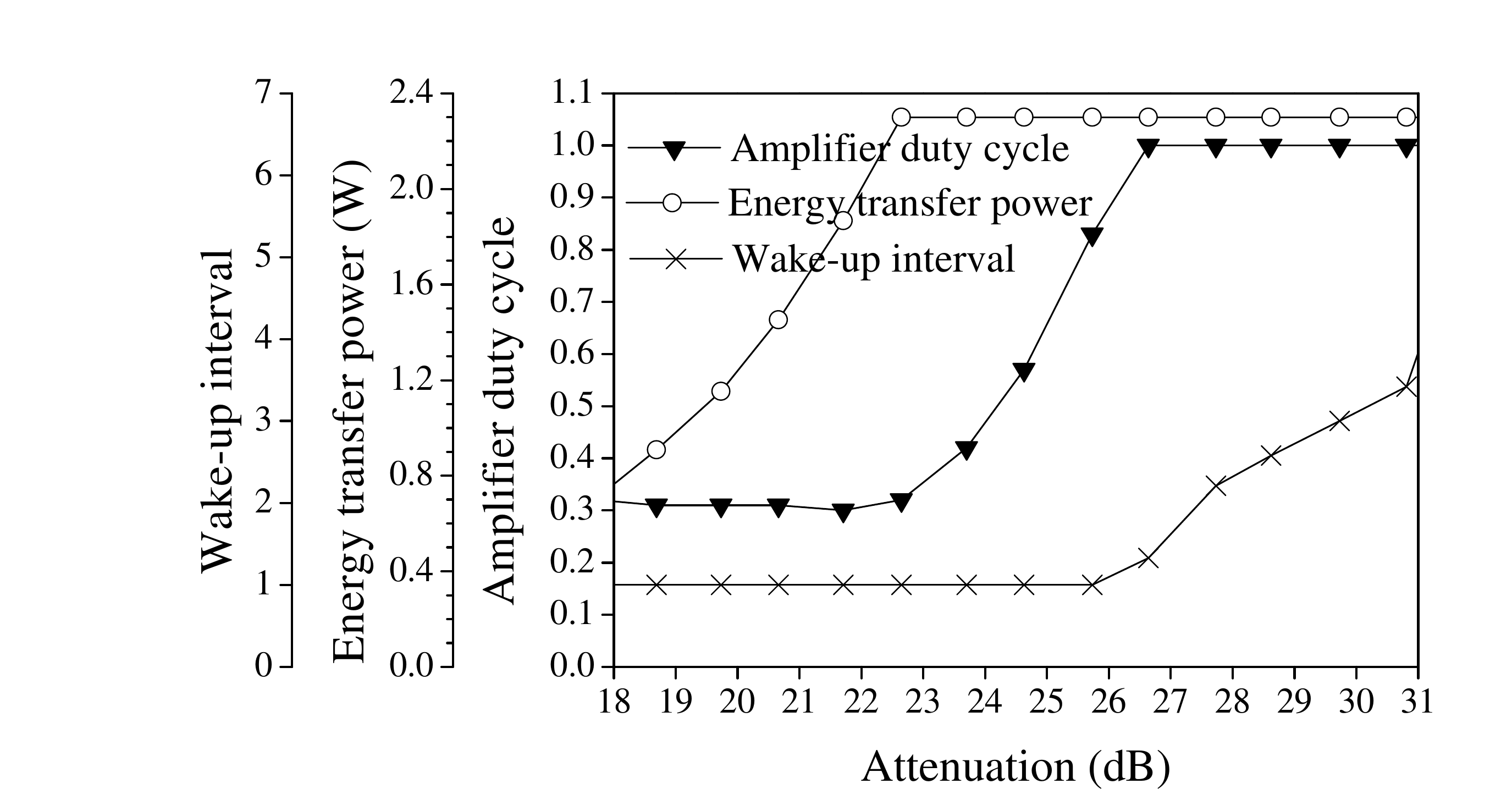}
        }~
    \subfigure[no amplifier duty cycling scheme]{
        \label{fig:contcomp}\includegraphics[width=4.1cm, bb=1.7in 0.3in 10.1in 7.4in] {./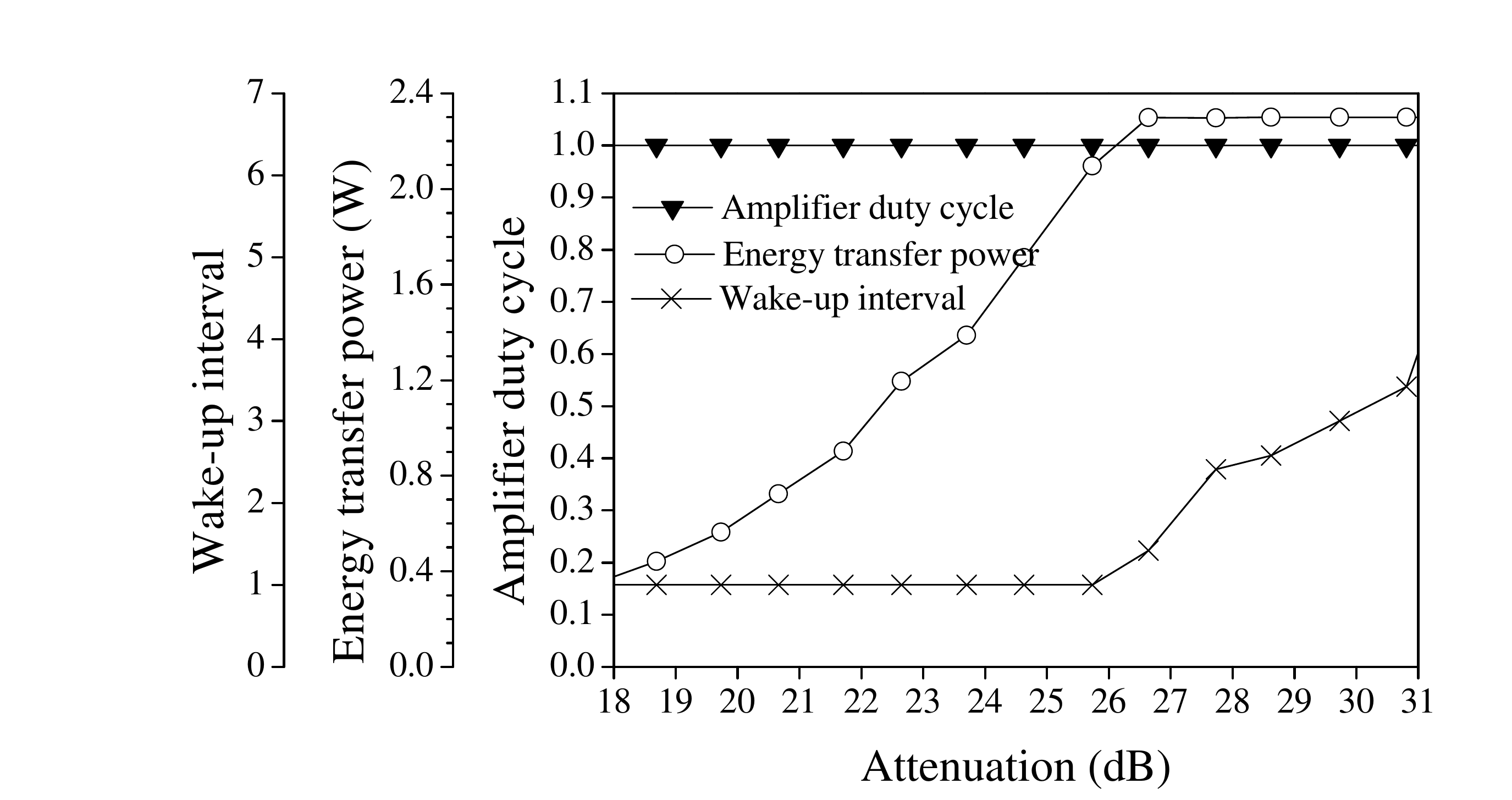}
        }
    \caption{Amplifier duty cycle, energy transfer power, and wake-up interval of proposed energy management scheme and no amplifier duty cycling scheme.}
    \label{fig:energymanagementresult}
\end{figure}

\begin{figure}
	\centering
    \includegraphics[width=4.5cm, bb=0.8in 0.3in 9.6in 7.2in]{./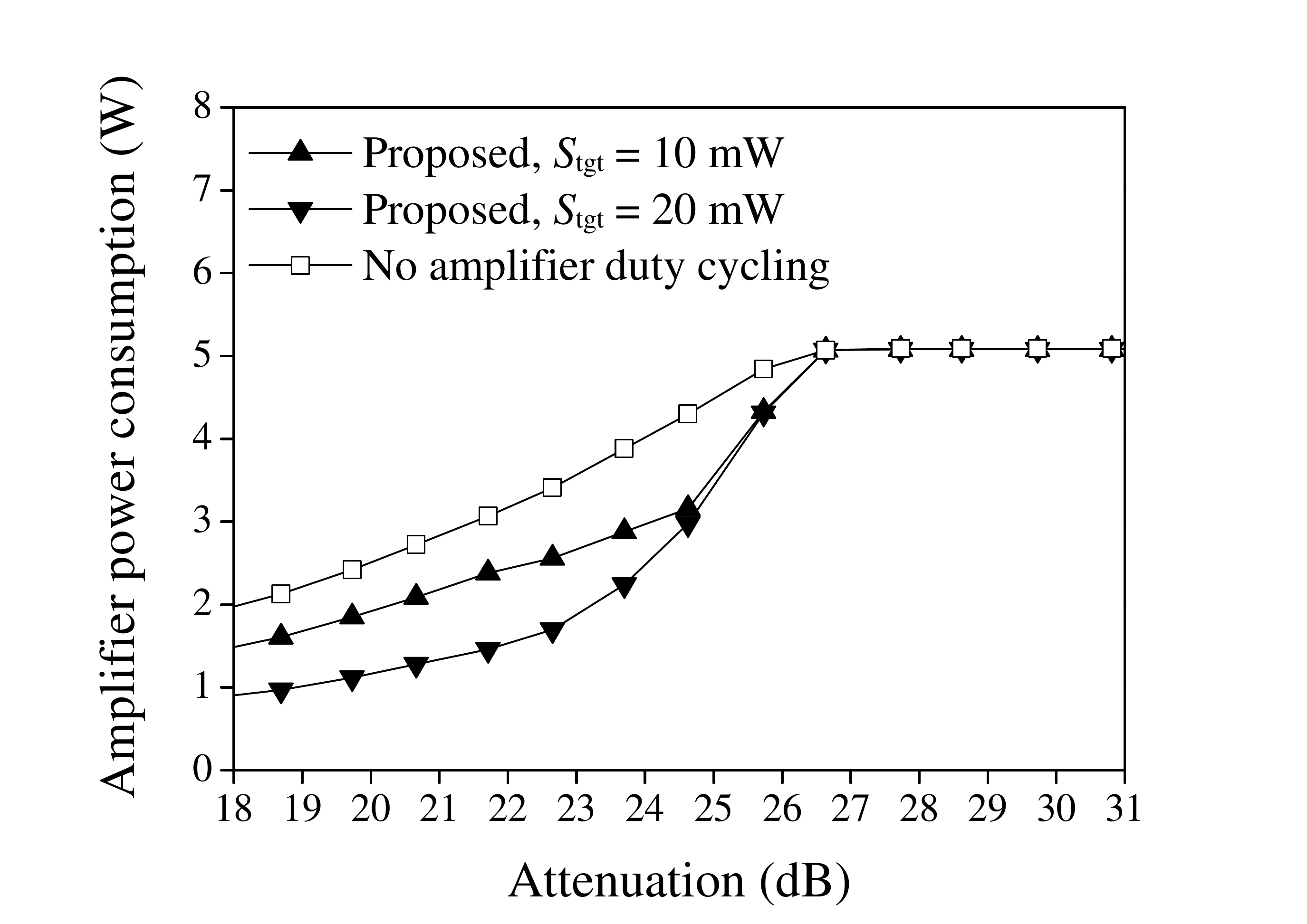}
    \caption{Comparison of amplifier power consumption of proposed energy management scheme and no amplifier duty cycling scheme.}
    \label{fig:contamppow}
\end{figure}

\section{Conclusion}\label{section:conclusion}

In this paper, we have built a full-fledged WPSN testbed and have conducted extensive experiments on the testbed.
By using the testbed, we have obtained the parameters for the WPSN model and have validated the usefulness and practical importance of the model.
Based on the stored energy evolution model, we have proposed the energy management scheme for the energy neutral operation of the sensor node.
The experimental results demonstrated that the proposed scheme adaptively controls the WPSN to achieve energy neutrality and high-efficiency RF power transfer.

\appendices

\section{Proof of Theorem \ref{theorem:optimal}}\label{proof:optimal}

If $Q(E_\text{tgt},r_\text{tgt}) \le \widehat{H}$, we can define $\widehat{\alpha} =  Q(E_\text{tgt},r_\text{tgt})/\widehat{H}$ such that $0\le \widehat{\alpha} \le 1$.
For such $\widehat{\alpha}$, it is satisfied that $\Phi(E_\text{tgt},\widehat{\alpha},\widehat{\Upsilon},h) = \widehat{\alpha}\widehat{H} = Q(E_\text{tgt},r_\text{tgt})$.
According to the definition of $\widehat{\Upsilon}$, we have $L(\widehat{\alpha},\widehat{\Upsilon},\widehat{\mu})=g(\widehat{\mu})\le \Omega(\alpha^*,\Upsilon^*)$.
In addition, we also have $L(\widehat{\alpha},\widehat{\Upsilon},\widehat{\mu})=\Omega(\widehat{\alpha},\widehat{\Upsilon}) - \widehat{\mu}(\Phi(E_\text{tgt},\widehat{\alpha},\widehat{\Upsilon},h)- Q(E_\text{tgt},r_\text{tgt}))=\Omega(\widehat{\alpha},\widehat{\Upsilon})$.
Therefore, we can conclude that $\Omega(\widehat{\alpha},\widehat{\Upsilon}) \le \Omega(\alpha^*,\Upsilon^*)$, which leads to $\widehat{\alpha}=\alpha^*$ and $\widehat{\Upsilon} = \Upsilon^*$.

\bibliographystyle{IEEEtran}
\bibliography{IEEEabrv,WPSN_BIB}

\end{document}